\def\ps@headings{%
\def\@oddhead{\mbox{}\scriptsize\rightmark \hfil \thepage}%
\def\@evenhead{\scriptsize\thepage \hfil \leftmark\mbox{}}%
\def\@oddfoot{}%
\def\@evenfoot{}}
\newtheorem{theorem}{\textbf{Theorem}}
\newtheorem{lemma}{\textbf{Lemma}}
\newtheorem{definition}{\textbf{Definition}}
\newtheorem{corollary}{\textbf{Corollary}}
\newtheorem{proposition}{\textbf{Proposition}}
\newtheorem{IP}{Integer Program}
\begin{document}
\title{Vulnerability of clustering under node failure in complex networks%\thanks{Grants or other notes
%about the article that should go on the front page should be
%placed here. General acknowledgments should be placed at the end of the article.}
}
%\subtitle{Do you have a subtitle?\\ If so, write it here}

%\titlerunning{Short form of title}        % if too long for running head

% \author{Alan Kuhnle         \and
%         Nam P. Nguyen       \and
%         Thang N. Dinh       \and
%         My T. Thai
% }
\author{Alan Kuhnle, Nam P. Nguyen, Thang N. Dinh, My T. Thai \\
  Department of Computer and Information Science and Engineering, University of Florida, USA \\
  kuhnle@ufl.edu
}

%\authorrunning{Short form of author list} % if too long for running head

% \institute{Alan Kuhnle \at
%   Department of Computer and Information Science and Engineering, University of Florida, USA \\
%               \email{kuhnle@ufl.edu}           
%            \and
%            Nam P. Nguyen \at
%            Department of Computer and Information Sciences, Towson University, USA. \\
%            \email{npnguyen@towson.edu}
%            \and
%            Thang N. Dinh \at
%            Department of Computer Science, Virginia Commonwealth University, USA. \\
%            \email{tndinh@vcu.edu}
%            \and
%            My T. Thai \Letter \at
% %           Division of Algorithms and Technologies for Networks Analysis, Faculty of Information Technology, Ton Duc Thang University, Ho Chi Minh City, Vietnam\\
%            Department of Computer and Information Science and Engineering, University of Florida, USA \\
%            \email{%thaitramy@tdt.edu.vn; 
%            mythai@cise.ufl.edu}
% }

%\date{Received: date / Accepted: date}
% The correct dates will be entered by the editor

\maketitle

\begin{abstract} 
Robustness in response to unexpected events is always desirable
for real-world networks.
To improve the robustness of any networked system, it is important to 
analyze vulnerability to external perturbation such as 
random failures or adversarial attacks occurring to elements of the network.
In this paper, we study an emerging problem in assessing the robustness of 
complex networks: the vulnerability of the clustering of the network to
the failure of network elements.
Specifically, we identify vertices whose failures will 
critically damage the network by degrading its clustering, 
evaluated through the average clustering coefficient.
This problem is important because any significant change made to the clustering, 
resulting from element-wise failures, could degrade network performance
such as the ability for information to propagate in a social network.
We formulate this vulnerability analysis as an optimization problem, 
prove its NP-completeness and non-monotonicity, and
we offer two algorithms to identify the vertices most important to clustering.
Finally, we conduct comprehensive experiments in synthesized 
social networks generated by various well-known models as well 
as traces of real social networks.
The empirical results over other competitive strategies show the 
efficacy of our proposed algorithms.
%\keywords{Network Clustering \and Robustness \and Network vulnerability \and Social networks}

\end{abstract}

\section{Introduction} \label{ssintroduction}

Network resilience to attacks and failures has been a growing concern in recent times.
Robustness is perhaps one of the most desirable properties for corporeal complex networks, 
such as the World Wide Web, transportation networks, 
communication networks, biological networks and social information networks. 
Roughly speaking, robustness of a network evaluates how much the network's normal 
function is affected in case of external perturbation, i.e., it measures the resilience 
of the network in response to unexpected events such as 
adversarial attacks and random failures \citep{holme2002attack}.
Complex systems that can sustain their organizational structure, 
functionality and responsiveness under such unexpected perturbation 
are considered more robust than those that fail to do so.
The concept of \textit{vulnerability} has generally been used to realize and characterize the lack of robustness and resilience of complex systems \citep{criado2012Strutural}.
In order to improve the robustness of real-world systems, it is therefore important to obtain key insights into the structural vulnerabilities of the networks representing them.
A major aspect of this is to analyze and understand the effect of failure (either intentionally or at random) of individual components on the degree of clustering in the network.

Clustering is a fundamental network property
that has been shown to be relevant 
to a variety of topics. For example, consider
the propagation of information through a social 
network, such as the spread of a rumor. 
A growing body of 
work has identified the importance of clustering to such propagation; 
the more clustered a network is, the easier it is for information to 
propagate \citep{centola2010spread, barclay2013peer,lu2011small,malik2013role,centola2011experimental}. In addition, in Fig. \ref{fig:rel_gcc}, we show experimentally a 
strong relationship between the final spread of information and 
the level of clustering in the network, with higher clustering corresponding
to higher levels of expected spread. 
The importance of clustering is not limited to social networks;
in the context of air transportation networks,
\citet{Ponton2013}
argued that higher clustering of such a network is beneficial, as
passengers for a cancelled flight can be rerouted more easily.
In this work, we use average clustering coefficient (ALCC) as
our definition and measure of clustering in a network.
ALCC was proposed for this purpose by \citet{watts1998cds}.

%% As empirical examples of this relationship,
%% the spread of a health behavior in an online social
%% network was shown to increase with clustering \citep{centola2010spread}, 
%% and similarly for the spread of 
%% opinions in
%% a social environment \citep{malik2013role}.  

%%% if a statement like this is kept, need to add graphs demonstrating: 
%Furthermore, in this work we present evidence for a relationship between ALCC and size of cascading failure.
\begin{figure}
  \centering
  \includegraphics[scale=0.36]{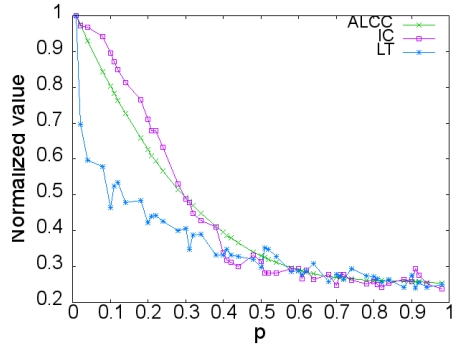}
  \caption{{ \color{black} Relationship between the value of ALCC and the expected number of activations
    under the LT and IC models, normalized by initial value. For more details and 
    discussion
    of $p$, see Section \ref{sect:rel_gcc}. }} \label{fig:rel_gcc}
\end{figure}

The identification of elements that crucially affect the clustering of the network, 
as a result, is of great impact. For example, as a matter of homeland security,
the critical elements for clustering in homeland communication networks
should receive greater resources for protection; in complement, 
the identification of critical elements in a social network of adversaries
could potentially limit the spread of information in such a network.
However, most studies of network vulnerability in the literature focus on how the network behaves when its elements (nodes and edges) are removed based on the pair-wise connectivity \citep{thangton2012}, natural connectivity \citep{hau2014}, or using centrality measures, such as degrees, betweeness \citep{Albert00theinternet}, the geodesic length \citep{holme2002attack}, eigenvector \citep{allesina2009}, etc.
To our knowledge, none of the existing work has examined the average clustering coefficient from the perspective of
vulnerability - as evidenced by the examples above, the damage made to the average clustering, resulted from element-wise failures, can potentially have severe effects on the functionality of the network. %change the entire organization and lead to the unpredictable malfunction or corruption of the whole network.
This drives the need for an analysis of clustering vulnerabilities in complex networks.

Finding a solution for this emerging problem, nevertheless, is fundamentally yet technically challenging because (1) the behavior of ALCC is not monotonic with respect to
node removal and thus can be unpredictable even in response to minor changes, and (2) given large sizes of real networks, the $NP$-completeness of the problem 
prohibits the tractable computation of an exact solution.
In this paper, we tackle the problem and analyze the vulnerabilities of the network clustering. Particularly, we ask the question: 

%----------------------%
\begin{addmargin}[1em]{1em}
\textit{``Given a complex network and its clustering coefficient, what are the most important vertices whose failure under attack, either intentionally or at random, will maximally degrade the network clustering?''}
\end{addmargin}
%----------------------%
%% To measure the network clustering, we utilize the average clustering coefficient (ALCC), a measure proposed in complex network science, as the main objective.
There are many advantages of ALCC over other structural measures \citep{watts1998cds}: (1) it is one of the most popular metrics for evaluating network clustering - the higher the ALCC of a network the better clustering  it exhibits, (2) it implies multiple network modular properties such as small-world scale-free phenomena, small diameter and modular structure (or community structure), and (3) it is meaningful on both connected and disconnected as well as dense and sparse graphs: Sparse networks are expected to have small clustering coefficient whereas extant
complex networks are found to have high clustering coefficients.

Our contributions in this paper are: (1) We define the Clustering  Vulnerability Assessment (CVA) on complex networks, and formulate it as an optimization problem with ALCC as the objective function.
(2) We study CVA's complexity (NP-completeness), provide rigorous proofs and vulnerability analysis on random failures and targeted attacks.
To our knowledge, this is the first time the problem and the analysis are studied specifically for ALCC.
(3) Given the intractability of the problem, we provide two efficient algorithms which scale to large networks to identify the worst-case scenarios of adversary attacks.
Finally, (4) we conduct comprehensive experiments in both synthesized networks (generated by various well-known models) as well as real networks.
The empirical results over other methods show the efficacy and scalability of our proposed algorithms.

The paper is organized as follows: Section \ref{ssrelatedwork} reviews studies that are related to our work. Section \ref{sspreliminaries} describes the notations, measure functions and the problem definition. Section \ref{ssanalysisComplexity} shows the proof of NP-completeness implying the intractability of the problem. Section \ref{ssRandomFailures} and \ref{ssnodeRemoval} present our analysis of clustering behaviors on random failures and targeted attacks, respectively. 
{\color{black} In Section \ref{sect:rel_gcc}, 
we provide further evidence for a correlation between the extent of influence propagation 
and ALCC.} In Section \ref{ssexperiment}, we report empirical results of our approaches in comparison with other strategies. Finally, Section \ref{ssconclusion} concludes the paper.

\section{Related work} \label{ssrelatedwork}
%\vspace{-0.1in}
Vulnerability assessment has attracted a large amount of attention from the network science community. 
Work in the literature can be divided into two categories: Measuring the robustness and manipulating the robustness of a network. 
In measuring the robustness, different measures and metrics have been proposed such as the 
graph connectivity \citep{thangton2012}, the diameter, relative size of largest components, and average size of the 
isolated cluster \citep{Albert00theinternet}. Other work suggests using the 
minimum node/edge cut \citep{Frank1970} or the second smallest non-zero eigenvalue or the Laplacian matrix \citep{Fiedler73}. 
In terms of manipulating the robustness, different strategies has been proposed such as \citet{Albert00theinternet,peixoto2012}, or using graph percolation \citep{Callaway2000}. 
Other studies focus on excluding nodes by centrality measures, such as betweeness and the geodesic length \citep{holme2002attack}, 
eigenvector \citep{allesina2009}, 
the shortest path between node pairs \citep{grubesic08}, or the total pair-wise connectivity \citep{thangton2012}. {\color{black}
\cite{Veremyev2015,Veremyev2014} developed integer programming frameworks to determine
the critical nodes that minimize a connectivity metric subject to a budgetary constraint.
For more information on network vulnerability assessments,
the reader is referred to the surveys \citep{Chen2016} and \citep{Gomes2016} and references therein.}

The vulnerability of the average clustering of a complex network 
has been a relatively unexplored area. 
In a related work \citep{namASONAM13}, the authors introduced the community structure vulnerability to analyze how the communities are affected when top $k$ vertices are excluded from the underlying graphs. They further provided different heuristic approaches to find out those critical components in modularity-based community structure. \citet{namWI14} suggested a method based on the generating edges of a community to find out the critical components. In a similar vein, \citet{alimASONAM14} studied the problem of breaking all density-based communities in the network, proved its NP-hardness and suggested an approximation as well as heuristic solutions. These studies, 
while forming the basis of community-based vulnerability analysis, face a 
fundamental limitation due to the ambiguity of definitions of a community in a network. 
Our work overcomes this particular shortcoming as ALCC is a well-defined and commonly accepted concept for quantifying the clustering of a network. {\color{black} \cite{Ertem2016} studied the problem of how to
detect groups of nodes in a social network with high clustering coefficient; however, their
work does not consider the vulnerability of the average clustering coefficient of a network.}
The diffusion of information in a social network has been studied from many
perspectives, including worm containment \citep{Nguyen2010}, viral marketing \citep{Kempe2003,Dinh2012c,Dinh2013,Kuhnle2017}, and the detection of overlapping communities \citep{Nguyen2011}.

\section{Notations and Problem definition} \label{sspreliminaries}
%%\vspace{-0.1in}
%----------------------%
\subsection{Notations}
%%\vspace{-0.1in}
%----------------------%
Let $G=(V,E)$ be an undirected graph representing a complex network where $V$ is the set of $N$ nodes and $E$ is the set of edges containing $M$ connections.
For a node $u \in V$, denote by $d_u$ and $N(u)$ the degree of $u$ and the set of $u$'s neighbors, respectively. For a subset of nodes $S \subseteq V$, let $G[S]$ and $m_S$ in this order denote the subgraph induced by $S$ in $G$ and the number of edges in this subgraph. Hereafter, the terms ``vertices'' and ``nodes'' as well as ``edges'' and ``links'' are used interchangeably.

\textit{(Triangle-free graphs)} A graph $G$ is said to be \textit{triangle-free} 
if no three vertices of $G$ form a triangle of edges.
Verifying whether a given graph $G$ is triangle-free or not is tractable by computing the 
trace of $A^3$ where $A$ is the adjacency matrix of $G$.
The trace is zero if and only if the graph is triangle-free. This verification 
can be done in polynomial time 
$O(  N^{\omega})$ for $\omega \leq 2.372$ with the latest matrix multiplying result \citep{Gall14}.
{\color{black} Alternatively, one can use the method of \citep{Schank05} with time complexity $O( M^{3/2} )$ to check if the graph is triangle-free.}
%----------------------%
%%\vspace{-0.12in}
\subsection{Clustering Measure Functions}
%%\vspace{-0.1in}
%----------------------%
\subsubsection{Local Clustering Coefficient (LCC)}
%----------------------%
Given a node $u \in V$, there are $d_u$ adjacent vertices of $u$ in $G$ and there are $d_u(d_u-1)/2$ possible edges among all $u$'s neighbors. The local clustering coefficient $C(u)$ is the probability that two random neighbors of $u$ are connected. Equivalently, it quantifies how close 
the induced subgraph of neighbors is to a clique. 
{\color{black}
The local clustering coefficient $C(u)$ is defined \citep{watts1998cds} 
\[ C(u) = \begin{cases} \displaystyle \frac{2 T(u) }{d_u(d_u - 1)} & d_u > 1 \\
                 0 & \text{ otherwise } 
                 \end{cases} \] }
%%%----------------------%
%%\begin{equation}\nonumber
	%%C(u) = \left\{ \begin{array}{ll}
	%%\frac{2 T(u) }{d_u(d_u - 1)} & \text{ if } d_u > 1,\\
	%%0 & \text{ otherwise },
	%%\end{array}\right.
	%%\label{equ:def_lcc}
%%\end{equation}
%%----------------------%
where $T(u)$ is the number of triangles containing $u$. It is clear that $0 \leq C(u) \leq 1$ for any $u \in V$. For any node $v \neq u$, let $\tilde C_v(u)$ denote the clustering coefficient of $u$ in $G[V \backslash \{v\}]$. {\color{black} Finally, define $tr(u,v)$ as the number of triangles containing both vertices $u$ and $v$.}
%Alternatively, $C(u) = \displaystyle \frac{2m_{N(u)}}{d_u(d_u-1)} \text{ when $d_u > 1$}$
%----------------------%
%%\begin{equation}
	%%C(u) = \frac{2m_{N(u)}}{d_u(d_u-1)} \text{ when $d_u > 1$,}
	%%\label{eq:def_lcc2}
%%\end{equation}
%----------------------%
%because $T(u) = m_{N(u)}$.
%----------------------%
\begin{table}[t]%\scriptsize
%\vspace{-0.1in}
  \centering
  \caption{List of Symbols}
	%\vspace{-0.15in}
    \begin{tabular}{ll}
    \addlinespace
    \toprule
    Notation  &  Meaning \\
    \midrule
		%$V$ & Set of vertices/nodes in the graph\\
		%$E$ & Set of edges/links in the graph\\
    $N$ & Number of vertices/nodes ($N = |V|$)\\
    $M$ & Number of edges/links ($M = |E|$)\\
		$d_u$ & The degree of $u$\\
    $N(u)$ & The set of neighbors of $u$\\
    $T(u)$ & The number of triangles containing $u$\\
    $C(u),C(G)$ & Clustering coefficients of $u$ and $G$\\
    $\tilde C_v(u),\tilde C_v(G)$ & Clustering coefficients of $u$ and $G$\\
    & after removing node $v$ from $G$\\
    $G[S]$ & The subgraph induced by $S \subseteq V$ in $G$\\
    $tr(u,v)$ & The number of triangles containing both $u,v$\\
    \bottomrule
    \end{tabular}%
  \label{tab:syms}%
  %\vspace{-0.28in}
\end{table}% 
%%\vspace{-0.12in}
%----------------------%
%----------------------%
\subsubsection{Average Clustering Coefficient (ALCC)}
%----------------------%
In graph theory, the average local clustering coefficient (ALCC) 
$C(G)$  of a graph $G$ is a measure indicating how much vertices of $G$ tend to cluster together \citep{watts1998cds}. 
This measure is defined as the average of LCC over all vertices in the network. $C(G)$ is defined as:
%$C(G) = \displaystyle \frac{1}{N}\sum_{u \in V} C(u).$
%%%----------------------%
{ \color{black}
\begin{equation}
	C(G) = \frac{1}{N}\sum_{u \in V} C(u).
	\label{equ:def_gcc}
\end{equation} }
%%%----------------------%
Because $0\leq C(u) \leq 1$ for every node $u \in V$, $C(G)$ is normalized and can only take values in the range $[0, 1]$ inclusively. For instance, $C(G) = 0$ when $G$ is a triangle-free graph and $C(G) = 1$ when $G$ is a clique or a collection of cliques. 
{ \color{black} The higher the clustering coefficient of $G$ the more closely the graph
locally resembles a clique. Also, we define
\[ \tilde{C}_v(G) = C \left( G[ V \backslash \{ v \} ] \right). \] }
%As a result, ALCC is a widely-targeted measure 
%for complex network optimization problems \citep{Ponton2013}. 
%% A closely related definition for ALCC as the number of closed triplets
%% divided by the number of connected triplets of vertices; we do not use
%% this definition in this work.
%%The list of notations and their meanings is summarized in Table \ref{tab:syms}.
%----------------------%
\subsection{Problem definition}
%----------------------%
We define the \emph{Clustering Structure Assessment} problem (CSA) as follows
%----------------------%
\begin{definition}[$CSA(G, k)$]
\label{defCSA}
Given a network $G = (V, E)$ and a positive integer $k \leq N$, find a subset $S^* \subseteq V$ of cardinality at most $k$ that maximizes the reduction of the clustering coefficient, i.e.,
$$S^* = \displaystyle \underset{S \subseteq V, |S| \leq k}{\operatorname{argmax}} \Delta C(S),$$
where $\Delta C(S) = C(G) - C(G[V \backslash S]).$
\end{definition}
%----------------------%

CSA problem aims to identify the most critical
vertices of the network with respect to 
the average clustering coefficient.
The input parameter $k$ can be interpreted 
as the the maximum number of node 
failures that normal functionality of the network
can withstand once adversarial attacks or random corruptions occur. Accordingly, the case $|S| = k$ identifies exactly $k$ critical vertices and examines the worst scenarios that can happen when these vertices are { \color{black} compromised. }

{\color{black}
\subsection{Formulation as cubic integer program}
In this section, we formulate the CSA problem as
an integer program.
Let $(e_{ij})_{i,j \in V}$ be the adjacency matrix of $G$.
\begin{lemma}
  For $u \in V$, $T(u)$ can be calculated in the following way:
  \[ 2T(u) = \sum_{i \in V} \sum_{j \in V} e_{ui} e_{uj} e_{ij}. \]
\end{lemma}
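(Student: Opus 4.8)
The plan is to read the double sum combinatorially: each summand $e_{ui} e_{uj} e_{ij}$ is the indicator of the event that the ordered pair of vertices $(i,j)$, together with $u$, spans a triangle. First I would recall the elementary properties of the adjacency matrix of the simple undirected graph $G$: every entry satisfies $e_{ab} \in \{0,1\}$, $e_{ab} = e_{ba}$, and $e_{aa} = 0$ since $G$ has no self-loops. Consequently, for fixed $u$, the product $e_{ui} e_{uj} e_{ij}$ equals $1$ if and only if $\{u,i\}$, $\{u,j\}$, and $\{i,j\}$ are all edges of $G$, and equals $0$ otherwise.

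Second, I would dispose of the degenerate index combinations. Any term with $i = j$ vanishes because then $e_{ij} = e_{ii} = 0$; any term with $i = u$ (resp. $j = u$) vanishes because $e_{uu} = 0$. Hence
\[ \sum_{i \in V} \sum_{j \in V} e_{ui} e_{uj} e_{ij} = \bigl| \{\, (i,j) : i,j,u \text{ pairwise distinct and } \{u,i\},\{u,j\},\{i,j\} \in E \,\} \bigr|, \]
i.e. the double sum counts precisely the ordered pairs $(i,j)$ for which $u$, $i$, $j$ are three mutually adjacent vertices.

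Third, I would group these ordered pairs according to the underlying unordered pair $\{i,j\}$. Every triangle of $G$ incident to $u$ is a set $\{u,a,b\}$ of three mutually adjacent vertices, and it contributes exactly the two ordered pairs $(a,b)$ and $(b,a)$ to the count above, while distinct triangles contribute disjoint pairs. Therefore the right-hand side equals $2 T(u)$, which establishes the identity.

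The argument is entirely elementary; the only place that warrants a moment's care is the explicit elimination of the diagonal-type terms ($i=j$, $i=u$, $j=u$) before performing the count, so that the remaining ordered pairs are genuinely in two-to-one correspondence with the triangles through $u$. That bookkeeping is the sole "obstacle," and it is a minor one.
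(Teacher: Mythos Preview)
Your proof is correct and follows essentially the same approach as the paper: interpret $e_{ui}e_{uj}e_{ij}$ as the indicator that $u,i,j$ form a triangle, and observe that each triangle through $u$ is counted twice via the two orderings of the other two vertices. You are simply more explicit than the paper about eliminating the degenerate terms and justifying the factor of $2$.
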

\begin{proof}
  The summand $e_{ui}e_{uj}e_{ij} = 1$ iff $i,j$ are neighbors of $u$,
  and if edge $(i,j)$ is in the graph; that is, vertices $u, i, j$ form
  a triangle.
\end{proof}

We formulate CSA as an integer program in 
the following way. Let 
$x_i = 1$ if $i$ is included in the set $S$, 
and $x_i = 0$ otherwise.
\begin{IP}
\begin{equation} \label{eq:cubic} \min \sum_{u \in V : d(u) > 1} \sum_{i \in V} \sum_{j \in V} 
  \frac{ e_{ui}e_{uj}e_{ij} x_ix_jx_u }{d_u(d_u -1)(N-k)}
\end{equation}
such that
\[ \sum_{u \in V} x_u \le k, \]
\[ x_u \in \{0, 1 \}, \, u \in V. \]
\end{IP}
Notice that the sum (\ref{eq:cubic}) computes
the ALCC of the residual graph after removing
$S$. 
As we show 
in Section \ref{ssRandomFailures}, Corollary \ref{cor:dec}, 
there always exists a node the removal
of which will not increase the ALCC;
thus, an optimal solution to the program is an
optimal solution to CSA. }

\section{Complexity of CSA} \label{ssanalysisComplexity}
%%\vspace{-0.1in}

In this section, we show the NP-completeness of $CSA(G, k)$. This intractability indicates that an optimal solution for CSA might not be computationally feasible in practice.
%----------------------%
\begin{definition}[Decision problem -- $CSA(G, k, \alpha)$]
Given a network $G = (V, E)$, a number $k \leq N$ and a value $0 \leq \alpha \leq 1$, does there exist a set $S \subseteq V$ of size $k$ such that $\Delta C(G) \geq \alpha$?
\end{definition}
%----------------------%
%----------------------%
\begin{theorem}
$CSA(G, k, C(G))$ is NP-Complete.
\end{theorem}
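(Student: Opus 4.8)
The plan is to translate the clustering requirement into a clean combinatorial condition and then reduce from a classical NP-complete problem. First observe that $C(H') = \frac{1}{|V(H')|}\sum_{u} C(u) \ge 0$ for every graph $H'$, and $C(H') = 0$ exactly when $T(u) = 0$ for all $u$, i.e. exactly when $H'$ is triangle-free. Since $\Delta C(S) = C(G) - C(G[V \backslash S])$ with $C(G[V \backslash S]) \ge 0$, the requirement $\Delta C(S) \ge C(G)$ is equivalent to $C(G[V \backslash S]) = 0$, i.e. to $G[V \backslash S]$ being triangle-free. Hence $CSA(G, k, C(G))$ is exactly the question: can one delete $k$ vertices of $G$ so that no triangle survives? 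Membership in NP is then immediate: given a candidate $S$, one computes $C(G)$ and $C(G[V \backslash S])$ in polynomial time (equivalently, counts the triangles of $G[V \backslash S]$, e.g. via the trace of the cube of its adjacency matrix, as recalled in Section~\ref{sspreliminaries}) and accepts iff $C(G[V \backslash S]) = 0$.

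For hardness I would reduce from \textsc{Vertex Cover}. Given an instance $(H, k)$ (we may assume $k \le |V(H)|$), build $G$ by taking all of $H$ and, for each edge $e = uv \in E(H)$, adding a fresh vertex $w_e$ together with the edges $u w_e$ and $v w_e$; this is clearly polynomial, and $|V(G)| = |V(H)| + |E(H)| \ge k$. The key structural step is to describe the triangles of $G$: each $w_e$ has exactly the two neighbours $u, v$, which are adjacent, so $w_e$ lies in the single triangle $\{u, v, w_e\}$; and since no edge among the vertices of $H$ was added, the triangles of $G$ are precisely the triangles $\{u, v, w_e\}$ over $e = uv \in E(H)$ together with whatever triangles $H$ itself contained.

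It then remains to verify the equivalence. In the forward direction, a vertex cover $S$ of $H$ hits the triangle $\{u, v, w_e\}$ because it covers the edge $uv$, and it hits any triangle already present in $H$ because a cover must meet each triangle of $H$ in at least two vertices; so deleting $S$ (padded up to size exactly $k$) leaves $G$ triangle-free. In the backward direction, from any triangle-hitting set $S$ of $G$ I would first normalize it by replacing each auxiliary vertex $w_e \in S$ with an endpoint of $e$ — this does not increase $|S|$ and still hits every triangle — so that we may assume $S \subseteq V(H)$; then hitting every $\{u, v, w_e\}$ while containing no $w_e$ forces $S$ to cover every edge of $H$. Consequently a minimum triangle-hitting set of $G$ has the same size as a minimum vertex cover of $H$, and $CSA(G, k, C(G))$ is a yes-instance if and only if $H$ has a vertex cover of size at most $k$, completing the reduction.

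I do not foresee a genuine obstacle, only two points that need care. The first is that $H$ may already contain triangles, and the reduction must be robust to this; this is exactly why the forward direction uses the ``$\ge 2$ vertices per triangle'' property of covers. The second is the mismatch between the ``exactly $k$'' in the CSA decision problem and the usual ``at most $k$'' form of \textsc{Vertex Cover}; this is absorbed by the padding argument, valid since $k \le |V(H)| \le |V(G)|$ and deleting extra vertices cannot create triangles, together with the equality of the two optimum sizes noted above. If one preferred to sidestep the first point entirely, the NP-hardness would also follow from the Lewis--Yannakakis theorem on node-deletion problems, since triangle-freeness is a nontrivial hereditary graph property; but the explicit reduction has the advantage of being self-contained.
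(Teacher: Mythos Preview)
Your proof is correct and, in fact, cleaner than the paper's. Both arguments begin with the same observation that $CSA(G,k,C(G))$ is exactly the question of whether $k$ vertex deletions can make $G$ triangle-free, and both verify membership in NP the same way. The difference is in the hardness source: the paper reduces from \textsc{3-SAT}, building clause-cliques and variable edges (the classical \textsc{3-SAT}~$\to$~\textsc{Vertex Cover} gadget) and then attaching a dummy vertex to every edge so that triangle-hitting forces a vertex cover of the underlying gadget graph; in effect it composes the standard \textsc{3-SAT}~$\to$~\textsc{Vertex Cover} reduction with the same edge-subdivision trick you use. Your reduction starts directly from \textsc{Vertex Cover} and applies only that last step, which isolates the essential idea and yields a shorter, more transparent argument. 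Your handling of the two delicate points (pre-existing triangles in $H$, and the ``exactly $k$'' versus ``at most $k$'' mismatch) is sound, and the Lewis--Yannakakis remark is a nice alternative route.
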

%----------------------%
%----------------------%
%\begin{comment}
\begin{proof}
{ \color{black} We show that the following subproblem of $CSA(G, k, C(G))$ is NP-complete; the subproblem asks }
for a set $S \subseteq V$ of $k$ nodes whose removal completely degrades the clustering coefficient $C(G[V\backslash S])$ to 0, or equivalently, makes the residual graph $G[V\backslash S]$ triangle-free (Lemma \ref{pro:basic1}). To show the NP-completeness, we first show that CSA is in NP, and then prove its NP-hardness by constructing a polynomial time reduction from 3-SAT to $CSA(G, k, C(G))$.
Given a set $S \subseteq V$ of $k$ nodes, one can verify whether $G[V \backslash S]$ is triangle-free by computing the trace of $A^3$ where $A$ is the adjacency matrix of $G[V \backslash S]$. As we mentioned above, this can be done in $O((N-k)^{2.372})$. Therefore, $CSA(G, k, C(G))$ is in NP.

Now, given an instance boolean formula $\phi$ of 3-SAT with $m$ variables and $l$ clauses, we will construct an instance of $CSA(G, k, C(G))$, where $k = m + 2l$, as follows: { \color{black} 
\begin{enumerate}
 \item For each clause $C = l_1 \vee l_2 \vee l_3$ of $\phi$, introduce a 3-clique in $G$ with 3 clause literals as vertices: add vertices
   $l_1^C, l_2^C, \l_3^C$, and edges $(l_i^C, l_j^C)$ for $1 \le i < j \le 3$.
   Color these vertices blue. \label{step3clique}
 \item For each variable $x_i$ of $\phi$, create two vertices representing literals $x$ and $\neg x$ in $G$ and connect them by an edge. That is,
   add vertices $v_{x_i}, v_{\neg x_i}$ and edge $(v_{x_i}, v_{\neg x_i}$). Color these vertices green. \label{stepvariable}
 \item For each blue vertex in a 3-clique created in step \ref{step3clique}, connect it to the corresponding green literal created in step \ref{stepvariable}.
   That is, for each literal $l_j$ in each clause $C$, if $l_j = x_i$, then add edge $(l_j^C, v_{x_i})$. If literal $l_j = \neg x_i$, then
   instead add edge $(l_j^C, v_{\neg x_i})$. \label{stepjoin}
 \item Finally, for every edge in $G$, create a dummy vertex $d$ (color it red) and connect $d$ to the two endpoints of that edge.\label{stepfinal}
\end{enumerate}
%1) For each clause of $\phi$, introduce a 3-clique in $G$ with 3 clause literals as vertices. Color these vertices blue;
%2) For each variable $x$ of $\phi$, create two vertices representing literals $x$ and $\neg x$ in $G$ and connect them by an edge. Color these vertices green;
%3) For each blue vertex in a 3-clique created in step 1, connect it to the corresponding green literal created in step 2;
%4) Finally, for every edge in $G$, create a dummy vertex $d$ (color it red) and connect $d$ to the two endpoints of that edge.
%%%----------------------%
\begin{figure}[t]
  \centering
  \includegraphics[scale=0.23]{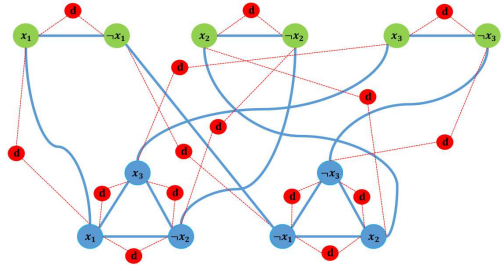}%
  \caption{{\color{black} Reduction example for a toy instance $(x_1 \vee \neg x_2 \vee x_3) \wedge (\neg x_1 \vee x_2 \vee \neg x_3)$ of 3-SAT.}}%
  \label{fig:npc}%
\end{figure}
%%%----------------------%

Figure \ref{fig:npc} illustrates 
the reduction of the toy boolean formula $(x_1 \vee \neg x_2 \vee x_3) \wedge (\neg x_1 \vee x_2 \vee \neg x_3)$. 
In this example, step \ref{step3clique} introduces two 3-cliques with blue vertices, step 2 creates three pairs of green vertices, 
and step \ref{stepjoin} consequently connects blue vertices to their corresponding green vertices by the thick curly edges. 
Finally, step \ref{stepfinal} assembles dummy nodes $d$'s (in red) and two dotted lines for every existing edges in $G$. }

Let $G_{-d}$ denote the graph $G$ without dummy vertices $d$'s and their adjacent dotted edges.
Assume that $\phi$ has a satisfied assignment, we construct $S$ by (i) include in $S$ all vertices corresponding to true literals, and (ii) for each clause, include in $S$ all vertices of the 3-clique but the one corresponding to its first true literal. Thus, $S$ includes $m$ green vertices and $2l$ blue vertices. It is verifiable that vertices in $S$ form the vertex cover of $G_{-d}$. As a result, the removal of all nodes in $S$ will make $G[V \backslash S]$ triangle-free (since it leaves no edges in $G_{-d}$).

Suppose there exists a set $S$ of $k$ nodes such that removing $k$ nodes in $S$ leaves $G[V \backslash S]$ triangle-free. We note that $S$ will not contain any dummy node $d$ because replacing $d$ by any of its adjacent literals (which are not already in $S$ yet) yields a better solution in term of triangle coverage. As a consequence, $S$ only contains blue and green vertices. Furthermore, 
nodes in $S$ have to be indeed the vertex cover of $G_{-d}$ in order for $G[V\backslash S]$ to be triangle-free. 
This cover must contain one green vertex for each variable and two blue vertices for each 3-clique (or clause), requiring exactly $k = m+2l$
vertices. Now, assign value \texttt{true} to the variables whose positive literals are in $S$. 
Because $k = m + 2l$, for each clause at least one edge connecting its blue 3-clique to the green vertices is covered by a variable vertex. 
Hence, the clause is satisfied.
\end{proof}

\section{Vulnerability Analysis in Random Failure} \label{ssRandomFailures}

\subsection{Monotonicity of ALCC}
%----------------------%
{ \color{black} The value of ALCC is not monotonic in terms of the set of excluded nodes $S$. }
Counterexamples showing the non-monotonicity of ALCC are presented in Fig. \ref{fig:monosub}. 
This implies that we do not always have either $C(G[V\backslash S_1]) \geq C(G[V \backslash S_2])$ or $C(G[V \backslash S_1]) \leq C(G[V \backslash S_2])$ for any subsets $S_1 \subseteq S_2 \subseteq V$. 
%So, how do we make sure that we can remove nodes to degrade ALCC of the residual network? Do such nodes always exist? 
{ \color{black} In fact, it is possible that ALCC could be at a local minimum with further node removal increasing
the value of ALCC.
Our analysis in Section \ref{sect:rf} shows that is always possible to degrade the value of 
ALCC by removing a vertex. 
We show that in any network $G$ there exists a vertex $u$ such that $\tilde C_u(G) \leq C(G)$. 
This result is the basis of the algorithms we present in Section \ref{ssnodeRemoval}.}
%-------------
%%\begin{figure}[t]%
	%%%\centering
	%%\includegraphics[scale=0.25]{figs/NONmonotoneExample.pdf}%	
	%%%\vspace{-0.1in}
	%%\caption{\small{Examples of the nonmonotonicity of ALCC. a) Removing the green vertex will reduce ALCC to 0 whereas b) Removing the green vertex will increase ALCC to 1.}}%
	%%\label{fig:mono}%
	%%%\vspace{-0.2in}
%%\end{figure}

\begin{figure}[t]%
\centering
	\subfigure[Nonmonotonicity of ALCC] {
		\includegraphics[scale=0.36]{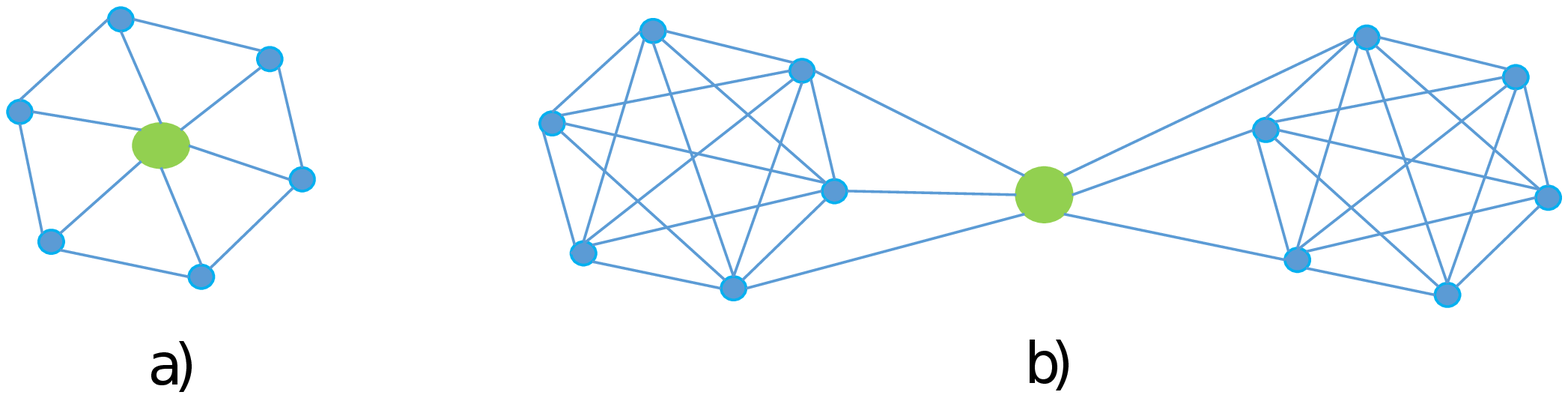}
		\label{fig:monosub}}

	%\vspace{-0.1in}
        \caption{Nonmonotonicity of ALCC. a) ALCC = 0 whereas b) ALCC = 1 when the green vertex is removed}%
	\label{fig:mono}%
	%\vspace{-0.23in}
\end{figure}

%----------------------%
%----------------------%
\subsection{Analysis { \color{black} of Random Failure}} \label{sect:rf}
When random failures occur, { \color{black} 
the ALCC value is unpredictable due to the nonmonotonicity of ALCC. }
That is, the removal of nodes can result in either higher or lower ALCC of the residual graph. 
We show that under uniform random failures the expected ALCC  
{ \color{black} $E_{u}[\tilde C_{u}(G)]$ } is at most the current ALCC value (Theorem \ref{theo:main1}). 
This result also indicates that, given a network $G$, there exists a sequence of 
subgraphs $G_i$ of $G$ whose ALCC values form a nonincreasing sequence (Corollary \ref{cor:frommain}). 

%----------------------%
\begin{lemma}
In a graph $G$, the following statements hold: %(i) $C(G) = 0$ iff $G$ is a triangle-free network, and (ii) $C(G) = 1$ iff $G$ is a clique or contains only separated cliques.
\begin{itemize}
  \item[(i)] $C(G) = 0$ if and only if $G$ is a triangle-free network.
  \item[(ii)] $C(G) = 1$ if and only if $G$ is a clique or contains only separated cliques.
\end{itemize}
\label{pro:basic1}
\end{lemma}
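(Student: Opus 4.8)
The plan is to reduce both equivalences to pointwise statements about the local coefficients $C(u)$, exploiting that $C(G)=\frac1N\sum_{u\in V}C(u)$ is an average and that $0\le C(u)\le 1$ for every $u$.

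For part (i), the first step is to note that a sum of nonnegative numbers vanishes iff every term vanishes, so $C(G)=0$ iff $C(u)=0$ for all $u\in V$. The second step is to check that $C(u)=0$ iff $T(u)=0$: when $d_u>1$ this is immediate from the formula $C(u)=2T(u)/(d_u(d_u-1))$, and when $d_u\le 1$ we have $C(u)=0$ by convention and also $T(u)=0$ since such a vertex lies in no triangle. The final step is the observation that $T(u)=0$ for every $u$ is by definition the statement that $G$ has no triangle. Chaining these equivalences yields (i); I would also remark that this is exactly the fact $C(G[V\setminus S])=0\iff G[V\setminus S]$ is triangle-free used in the NP-completeness proof.

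For part (ii), since each $C(u)\le 1$, the average equals $1$ iff $C(u)=1$ for every $u$. The key step is to characterize this condition structurally. First, $C(u)=1$ forces $d_u\ge 2$ (else $C(u)=0$), and then $2T(u)=d_u(d_u-1)$, i.e. the number of edges inside $N(u)$ equals $\binom{d_u}{2}$, which says precisely that $N(u)$ induces a complete graph; equivalently the closed neighborhood $N[u]:=N(u)\cup\{u\}$ is a clique. I then claim this local condition, holding at every vertex, forces every connected component of $G$ to be a clique on at least three vertices. To see this, fix a component $H$ and a vertex $u\in H$; I assert $H=N[u]$. If not, connectedness of $H$ gives an edge $(w,z)$ with $w\in N[u]$, $z\notin N[u]$. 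Then $w\ne u$, so $w\in N(u)$ and hence $u\in N(w)\subseteq N[w]$; since $N[w]$ is a clique containing both $u$ and $z$, we get $(u,z)\in E$, i.e. $z\in N[u]$, a contradiction. Thus $H=N[u]$ is a clique, of size $d_u+1\ge 3$. The converse direction is trivial: if every component is a complete graph on $\ge 3$ vertices then $C(u)=1$ for all $u$, so $C(G)=1$.

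The routine parts here are the two pointwise equivalences; the only step demanding a genuine (if short) argument is the implication ``local clique at every vertex $\Rightarrow$ the components are cliques'' in (ii), so that is where I would be most careful. I would also flag one interpretive point in the write-up: the degenerate ``cliques'' $K_1$ and $K_2$ have $C=0$, so in the statement ``$G$ is a clique or contains only separated cliques'' the word \emph{clique} must be read as a complete graph on at least three vertices (isolated vertices and pendant edges, which contribute $C(u)=0$, have to be excluded).
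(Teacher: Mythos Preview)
Your proof is correct and follows the same outline as the paper's: reduce $C(G)\in\{0,1\}$ to the pointwise condition $C(u)\in\{0,1\}$ for all $u$, then characterize that locally. In fact you are more thorough than the paper, which in part (ii) simply asserts that closed-neighborhood cliques at every vertex force the components to be cliques and does not mention the $K_1$/$K_2$ degeneracy you correctly flag.
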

%----------------------%
%----------------------%
{ \color{black}
\begin{proof}
\begin{itemize}
\item[(i)] Suppose there exists a triangle $u,v,w$ in $G$. Then $C(u) > 0$, so
  $C(G) > 0$. For the converse, if $C(G) > 0$, there exists $u \in V$ such that
  $C(u) > 0$. By definition of $C(u)$, there exists a triangle $u,v,w$ containing $u$.
\item[(ii)]
  Suppose $C(G) = 1$. Then, for each $u \in V$, the subgraph induced by $\{ u \} \cup N(u)$ is a clique,
  from which $G$ is a clique or only separated cliques. The converse follows directly from the
  definition of $C(G)$.
  \end{itemize}
\end{proof} }
%----------------------%
%----------------------%
\begin{lemma}
For any $u \in V$,
$$2T(u) = \displaystyle\sum_{v \in N(u)} |N(u) \cap N(v)|.$$
\label{lem:triangle}
\end{lemma}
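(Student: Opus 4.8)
The plan is to prove the identity by double counting ordered pairs of adjacent neighbors of $u$. First I would introduce the set
\[ P = \{ (v,w) : v,w \in N(u),\ v \neq w,\ (v,w) \in E \}, \]
and show that $|P| = 2T(u)$. Indeed, a triangle containing $u$ is exactly a set $\{u,v,w\}$ of three pairwise-adjacent vertices, i.e.\ a choice of two distinct vertices $v,w \in N(u)$ with $(v,w)\in E$; each such triangle contributes precisely the two ordered pairs $(v,w)$ and $(w,v)$ to $P$, and conversely every element of $P$ determines such a triangle. Hence $|P| = 2T(u)$.

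Next I would evaluate $|P|$ by grouping its elements according to their first coordinate. Fixing $v \in N(u)$, the admissible second coordinates $w$ are exactly the vertices that are neighbors of $u$ (so $w \in N(u)$) and, since the pair $(v,w)$ must be an edge, neighbors of $v$ as well (so $w \in N(v)$); that is, $w \in N(u)\cap N(v)$. The required conditions $w \neq v$ and $w \neq u$ are automatic here: $v \notin N(v)$ and $u \notin N(u)$ because $G$ is a simple undirected graph without self-loops, so no spurious vertices are counted. Therefore the number of valid $w$ for a given $v$ is $|N(u)\cap N(v)|$, and summing over all $v \in N(u)$ gives $|P| = \sum_{v \in N(u)} |N(u)\cap N(v)|$. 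Combining the two expressions for $|P|$ yields $2T(u) = \sum_{v \in N(u)} |N(u)\cap N(v)|$.

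The argument is essentially a routine counting exercise, so there is no real obstacle; the only point that needs a word of care is the bookkeeping in the previous paragraph, namely verifying that $N(u)\cap N(v)$ contains neither $u$ nor $v$, which is exactly what keeps the count from overshooting and relies only on $G$ being simple.
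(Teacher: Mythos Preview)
Your proof is correct and follows essentially the same double-counting argument as the paper: the paper observes that $|N(u)\cap N(v)|$ counts the triangles containing both $u$ and $v$, and that each triangle on $u$ is counted twice in the sum because it contains exactly two neighbors of $u$. Your version is simply a more explicit rendering of this, phrased via ordered pairs of adjacent neighbors, with the added care of checking that $u,v\notin N(u)\cap N(v)$.
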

%----------------------%
{ \color{black} 
\begin{proof}
For each neighbor $v$ of $u$, the number of triangles that contain both $u$ and $v$ is $|N(u) \cap N(v)|$.  
Since each triangle containing $u$ contains exactly two neighbors of $u$, it follows that the summation 
$\sum_{v \in N(u)} |N(u) \cap N(v)|$ counts twice the number of triangles containing $u$. 
\end{proof} }
%----------------------%
%----------------------%
\begin{lemma}
For any node $u \in V$:  
	\begin{equation}
		\frac{1}{N-1}\sum_{v \in V \setminus \{u\}} \tilde C_v(u) \leq  C(u).
		\label{eq:removalMain}
	\end{equation}
	\label{lem:lcc_pro}
\end{lemma}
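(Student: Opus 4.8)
The plan is to compare $\tilde C_v(u)$ with $C(u)$ according to whether the deleted vertex $v$ is a neighbour of $u$ or not, and then sum the comparison over all $v\neq u$. First I would dispose of the degenerate cases $d_u\le 1$: there $C(u)=0$, and since deleting a vertex can never raise $d_u$, every $\tilde C_v(u)$ equals $0$ as well, so \eqref{eq:removalMain} holds with equality $0=0$. Hence assume $d_u\ge 2$ from now on.

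Next I would partition $V\setminus\{u\}$ into the non-neighbours and the neighbours of $u$. For a non-neighbour $v$, deleting $v$ changes neither $d_u$ nor $T(u)$ — no triangle on $u$ can use a vertex outside $N(u)$ — so $\tilde C_v(u)=C(u)$, contributing exactly $(N-1-d_u)C(u)$ to the left-hand side. For a neighbour $v$, deleting $v$ removes the edge $(u,v)$ and hence drops $d_u$ by one, while destroying precisely the triangles on $u$ through $v$, of which there are $tr(u,v)=|N(u)\cap N(v)|$. Therefore, when $d_u\ge 3$,
\[ \tilde C_v(u) = \frac{2\bigl(T(u)-tr(u,v)\bigr)}{(d_u-1)(d_u-2)}, \]
and when $d_u=2$ deleting a neighbour leaves $u$ with degree $1$, so $\tilde C_v(u)=0$.

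The key computation is the sum over $v\in N(u)$ in the case $d_u\ge 3$: pulling out the constants gives $\sum_{v\in N(u)}\tilde C_v(u)=\frac{2}{(d_u-1)(d_u-2)}\bigl(d_uT(u)-\sum_{v\in N(u)}tr(u,v)\bigr)$, and here I invoke Lemma~\ref{lem:triangle} in the form $\sum_{v\in N(u)}|N(u)\cap N(v)|=2T(u)$ to collapse this to $\frac{2T(u)(d_u-2)}{(d_u-1)(d_u-2)}=\frac{2T(u)}{d_u-1}=d_uC(u)$. Adding the non-neighbour contribution yields $\sum_{v\neq u}\tilde C_v(u)=(N-1-d_u)C(u)+d_uC(u)=(N-1)C(u)$, i.e. \eqref{eq:removalMain} holds with equality. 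For $d_u=2$ the neighbour terms contribute $0\le d_uC(u)$, so the bound still holds (strictly unless $C(u)=0$). Dividing through by $N-1$ finishes the proof.

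The only delicate point — the "obstacle," such as it is — is the boundary case $d_u=2$, where the closed form for $\tilde C_v(u)$ degenerates to $0/0$ and must be replaced by the direct observation that $u$ is left with degree $1$; one must also be slightly careful that for $v\in N(u)$ the set $N(u)\cap N(v)$ excludes both $u$ and $v$, so that $|N(u)\cap N(v)|$ genuinely equals the number of triangles on the edge $(u,v)$. Everything else is bookkeeping, and it is worth noting that the inequality is in fact tight for every vertex $u$ with $d_u\ge 3$.
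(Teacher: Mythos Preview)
Your proof is correct and follows essentially the same approach as the paper's: partition $V\setminus\{u\}$ into neighbours and non-neighbours of $u$, observe that non-neighbours contribute $C(u)$ each, and use Lemma~\ref{lem:triangle} to collapse the neighbour sum to $d_uC(u)$ when $d_u\ge 3$, yielding equality in that case and a genuine inequality only in the small-degree boundary cases. The only cosmetic difference is that you treat $d_u\le 1$ and $d_u=2$ separately while the paper bundles them into a single case $d_u\le 2$.
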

%----------------------%
%----------------------%
\begin{proof}
To prove this Lemma, we will show the following statements regarding the degree of $u$:
\begin{align}
	\frac{1}{N-1}\sum_{v \in V \setminus \{u\}} \tilde C_v(u) \leq  C(u) \text{ \quad when $d_u \leq 2$}.	
	\label{eq:removal2}
\end{align}
%----------------------%
%----------------------%
\begin{equation}
	\frac{1}{N-1}\sum_{v \in V \setminus \{u\}} \tilde C_v(u) =  C(u)	\text{ \quad when $d_u > 2$}.
	\label{eq:removal}
\end{equation}
%----------------------%
Eq. (\ref{eq:removalMain}) is equivalent to 
$ \displaystyle\sum_{v \in V\setminus \{u\}} \tilde C_v(u) \leq (N-1) C(u).$

Expanding the left-hand-side (LHS) of this inequality yields
\begin{align}
\nonumber \sum_{v \in V\setminus \{u\}} \tilde C_v(u) &= \sum_{v \in N(u)} \tilde C_v(u)  + \sum_{v \in V\setminus (N(u)\cup \{u\}) } \tilde C_v(u) \\
&= \sum_{v \in N(u)} \tilde C_v(u) +  (N-d_u - 1) C(u).
\label{eq:total}
\end{align}

To find $\sum_{v \in V\setminus (N(u)\cup \{u\}) } \tilde C_v(u)$, we use the fact that removing a non-neighbor node of $u$ will not affect the local clustering coefficient $C(u)$, i.e., $\tilde C_v(u) = C(u)$ for $v \in V\setminus (N(u)\cup \{u\})$. There are $(N - d_u - 1)$ non-neighbors vertices of $u$ in $G$. Thus the second term of (\ref{eq:total}) follows. To evaluate the first term of Eq. (\ref{eq:total}), we consider two cases:
 
Case (i): When $d_u \leq 2$ (i.e., $u$ has only one or two neighbors). In this case, the removal of any neighbor of $u$ will make $d_u \leq 1$, and thus, will drop $\tilde C_v(u)$ to 0 based on the definition of LCC. This implies
$0 =  \displaystyle\sum_{v \in N(u)} \tilde C_v(u) \leq d_u \times C(u).$
Substituting this to the first term of Eq. (\ref{eq:total}) yields Eq. (\ref{eq:removal2})

Case (ii): When $d_u > 2$. For any $v \in N(u)$, removing $v$ degrades $d_u$ to $d_u - 1$ and decreases the number of triangles on $u$ by an amount of $|N(u) \cap N(v)|$. As a result,
\begin{equation}
%\tilde C_v(u) = \frac{2 \left(T(u) - |N(u) \cap N(v)|\right)}{ (d_u - 1) (d_u - 2)}, \quad \forall v \in N(u).
\tilde C_v(u) = \frac{2 \left(T(u) - |N(u) \cap N(v)|\right)}{ (d_u - 1) (d_u - 2)}.
\end{equation}

Therefore,
\begin{align}
\nonumber\sum_{v \in N(u)} \tilde C_v(u) = \frac{\sum_{v \in N(u)}2 \left(T(u) - |N(u) \cap N(v)|\right)}{ (d_u - 1) (d_u - 2)}\\
\label{eq:tu}= \frac{ 2 (d_u T(u) - \sum_{v\in N(u)}|N(u) \cap N(v)|)}{(d_u - 1) (d_u - 2)}
\end{align} 

By Lemma \ref{lem:triangle}, we can simplify Eq. (\ref{eq:tu}) to
\begin{align*}
 \sum_{v \in N(u)} \tilde C_v(u)=\frac{  2(d_u-2) T(u)}{(d_u - 1) (d_u - 2)} = d_u C(u)
\end{align*} 

Substituting this to Eq.~(\ref{eq:total}) yields Eq. (\ref{eq:removal}). The \emph{inequality} in Lemma \ref{lem:lcc_pro} occurs only if $u$ is of single degree, or $u$ has exactly two connected neighbors.
\end{proof}
%----------------------%

Using Lemma \ref{lem:lcc_pro}, we can show the following main result of ALCC's behavior on random failures:
%----------------------%
\begin{theorem}
In a graph $G$, { \color{black} $E_u[\tilde C_{u}(G)] \leq C(G)$}.
\label{theo:main1}
\end{theorem}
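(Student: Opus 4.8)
The plan is to unwind the definition of the expectation over a uniformly random failed vertex and reduce the statement to a double summation that Lemma \ref{lem:lcc_pro} controls term by term. Concretely, since $u$ is chosen uniformly at random from $V$, we have $E_u[\tilde C_u(G)] = \frac{1}{N}\sum_{u \in V} \tilde C_u(G)$. Now expand each $\tilde C_u(G) = C(G[V\setminus\{u\}])$ using the definition \eqref{equ:def_gcc} of ALCC on the residual graph, which has $N-1$ vertices: $\tilde C_u(G) = \frac{1}{N-1}\sum_{w \in V \setminus \{u\}} \tilde C_u(w)$, where $\tilde C_u(w)$ denotes the local clustering coefficient of $w$ in $G[V \setminus \{u\}]$ (matching the notation $\tilde C_v(\cdot)$ of the preliminaries with the removed vertex being $u$).

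Next I would substitute this in and swap the order of the two finite sums:
\[ E_u[\tilde C_u(G)] = \frac{1}{N(N-1)} \sum_{u \in V} \sum_{w \in V \setminus \{u\}} \tilde C_u(w) = \frac{1}{N(N-1)} \sum_{w \in V} \sum_{u \in V \setminus \{w\}} \tilde C_u(w). \]
The inner sum $\sum_{u \in V \setminus \{w\}} \tilde C_u(w)$ is precisely the left-hand side (up to the factor $N-1$) of the inequality in Lemma \ref{lem:lcc_pro} applied to the vertex $w$; hence $\sum_{u \in V \setminus \{w\}} \tilde C_u(w) \leq (N-1)\, C(w)$.

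Plugging this bound back in gives
\[ E_u[\tilde C_u(G)] \leq \frac{1}{N(N-1)} \sum_{w \in V} (N-1)\, C(w) = \frac{1}{N} \sum_{w \in V} C(w) = C(G), \]
which is the claim. I expect no serious obstacle here: the only points requiring care are (i) keeping the bookkeeping of which vertex is removed versus which vertex's LCC is being measured straight when interchanging the summation order, and (ii) noting that Lemma \ref{lem:lcc_pro} is exactly the per-vertex statement needed, so the theorem is essentially an averaging (Fubini-style) consequence of that lemma. One could also remark that equality holds iff every vertex of $G$ has degree at most $1$ or has exactly two mutually adjacent neighbors, inheriting the equality condition from Lemma \ref{lem:lcc_pro}, though this is not needed for the statement.
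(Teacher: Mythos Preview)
Your proof is correct and follows essentially the same route as the paper: expand the expectation as a double sum, interchange the order of summation, and apply Lemma~\ref{lem:lcc_pro} to each inner sum. One small caveat on your optional closing remark: the equality case is inverted---equality in Lemma~\ref{lem:lcc_pro} fails precisely when a vertex has degree exactly~$2$ with its two neighbors adjacent, so equality in the theorem holds iff \emph{no} such vertex exists---but as you note this side remark is not needed for the statement.
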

%----------------------%
%----------------------%
\begin{proof}
By definition of ALCC, we have
\begin{align*}
 \sum_{v \in V}\tilde C_v(G) &= \sum_{v \in V}\left[ \frac{1}{N-1} \sum_{u \in V\setminus \{v\}}\tilde C_v(u) \right].
\end{align*}

Applying Eq. (\ref{eq:removalMain}) in Lemma \ref{lem:lcc_pro} gives
%----------------------%
\begin{align*}
 \sum_{u \in V}\left[ \frac{1}{N-1} \sum_{v \in V \setminus\{u\}}\tilde C_v(u) \right] &\leq \sum_{u \in V} C(u) = N \times C(G).
\end{align*}
Thus $E[\tilde C_{\cdot}(G)] =  \displaystyle\frac{1}{N}\sum_{v \in V}\tilde C_v(G) \leq C(G).$
\end{proof}
%----------------------%
%----------------------%
\begin{corollary} \label{cor:dec}
In a graph $G \equiv G_0$ of $N$ nodes, there exists a sequence of subgraphs $G_0 \supseteq G_1 \supseteq \cdots \supseteq G_N \equiv \emptyset$ such that $C(G_i) \geq C(G_{i+1})$ and $G_{i+1}$ is constructed by removing one vertex from $G_i$ for $i = 0, \ldots, N-1$.
\label{cor:frommain}
\end{corollary}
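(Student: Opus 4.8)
The plan is to derive the corollary directly from Theorem \ref{theo:main1} by a greedy ``peeling'' argument, with the only extra ingredient being that an average is never smaller than its minimum. Concretely, Theorem \ref{theo:main1} states that $E_u[\tilde C_u(G)] = \frac{1}{N}\sum_{u \in V}\tilde C_u(G) \le C(G)$; hence there must exist at least one vertex $u^* \in V$ with $\tilde C_{u^*}(G) \le C(G)$, for otherwise every summand would strictly exceed $C(G)$ and so would their average. In words: in any nonempty graph one can always delete some vertex without increasing the ALCC. This pointwise statement is exactly what is needed to build the chain.

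With this in hand I would argue by induction on $i$. Set $G_0 = G$. Suppose $G_i$ has been constructed and has $N - i \ge 1$ vertices. Since Theorem \ref{theo:main1} applies to every finite graph, in particular to the induced subgraph $G_i$, we may pick a vertex $u_i$ of $G_i$ with $\tilde C_{u_i}(G_i) \le C(G_i)$ and set $G_{i+1} = G_i[\,V(G_i) \setminus \{u_i\}\,]$. Then $G_{i+1} \subseteq G_i$, the graph $G_{i+1}$ is obtained from $G_i$ by removing a single vertex, and $C(G_{i+1}) = \tilde C_{u_i}(G_i) \le C(G_i)$. Iterating $N$ times exhausts the vertex set, so $G_N \equiv \emptyset$, and the resulting chain $G_0 \supseteq G_1 \supseteq \cdots \supseteq G_N \equiv \emptyset$ has the claimed monotonicity $C(G_i) \ge C(G_{i+1})$ at every step.

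I expect no genuine obstacle here: essentially all of the content sits in Theorem \ref{theo:main1} (and, underneath it, Lemmas \ref{lem:triangle} and \ref{lem:lcc_pro}). The only items worth a sentence in the write-up are (a) observing that Theorem \ref{theo:main1} is stated for an arbitrary graph and therefore applies verbatim to each intermediate $G_i$, since nothing in its hypothesis constrains $G$ beyond being a finite simple graph; and (b) the degenerate tail of the sequence, where $G_i$ has one or two vertices and is thus triangle-free with $C(G_i) = 0$ by Lemma \ref{pro:basic1}(i) — there the inequality $C(G_i) \ge C(G_{i+1})$ reduces to $0 \ge 0$, and the convention $C(\emptyset) = 0$ makes the final step $G_{N-1} \supseteq G_N$ legitimate.
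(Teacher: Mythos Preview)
Your proposal is correct and matches the paper's own approach: the corollary is stated without proof, but the paper sketches the key step elsewhere (Section \ref{ssRandomFailures} and the proof of Theorem \ref{theo:main2}), namely that Theorem \ref{theo:main1} forces the existence of some $u$ with $\tilde C_u(G) \le C(G)$ via the same average-versus-minimum contradiction you give, after which one iterates. Your write-up is in fact more careful than the paper's, since you explicitly note that Theorem \ref{theo:main1} applies to each intermediate $G_i$ and handle the degenerate tail.
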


\section{Algorithms} \label{ssnodeRemoval}

In this section, we present two algorithms for CSA problem, namely \texttt{simple\_greedy} (Alg. \ref{algo:greedy}), and Fast Adaptive Greedy Algorithm (FAGA) (Alg. \ref{algo:fast_greedy}). 
Alg. \ref{algo:greedy} is a simpler greedy algorithm than FAGA,
which employs more sophisticated strategies { \color{black} to more efficiently 
provide a solution of significantly higher quality than Alg. } \ref{algo:greedy}.
%----------------------%
\begin{algorithm}
  \caption{Greedy Algorithm (\texttt{simple\_greedy})}
    \label{algo:greedy}
  \begin{algorithmic}[1]
    \State $S \gets \emptyset$;
    \For{  each $u \in V$ }
			\State $\tilde C_{u}(G) \gets C(G[V\setminus\{u\}]$);
    \EndFor
    \State $S \gets k \textrm{ vertices with lowest } \tilde C_{\cdot}(G)$ values;
    \State \Return{$S$}
  \end{algorithmic}
\end{algorithm}
%----------------------%
%\vspace{-0.2in}
\subsection{Simple Greedy Algorithm}
Our first algorithm (Alg.~\ref{algo:greedy}) computes for each node $u$ the ALCC value after removing $u$, denoted by $\tilde C_{u}(G)$. The $k$ vertices associated with the lowest values of $\tilde C_{u}(G)$ are included in the solution. Notice that in this algorithm, the values of $\tilde C_{u}(G)$ are computed only once, and $k$ nodes are simultaneously included in the final solution. 
{ \color{black} Since the local clustering coefficient of a node $u$ is 
dependent only on the subgraph of its neighbors, we chose this approach over 
iteratively recomputing
$\tilde C_{u}( G \backslash \{ s_1, \ldots, s_i \} )$ for all nodes after
choosing $\{ s_1, \ldots, s_i \}$ into set $S$. }

\textit{Time-complexity}:
The complexity of Alg. \ref{algo:greedy} depends on the $N$ calls to compute the ALCC of the network. There are two state-of-the-art methods in \citep{Gall14} and \citep{Schank05} for this purpose. 
If ALCC is computed using the matrix multiplying technique in \citep{Gall14}, 
the time-complexity is $O(N^{\omega})$ with $\omega \leq 2.372$. Alternatively, if ALCC is computed 
using the method in \citep{Schank05}, which has complexity of $O( M^{3/2})$, the overall complexity will be $O( N M^{3/2})$. 
In practice, neither of these two upper bounds fully dominates the other. 
{ \color{black} In our experimental evaluation in Section \ref{ssexperiment}, 
we utilize \citep{Schank05} for computing ALCC. }

\subsection{Fast Adaptive Greedy Algorithm}
%%\vspace{-0.07in}
We next present the Fast Adaptive Greedy Algorithm (FAGA - Alg. \ref{algo:fast_greedy}) 
that significantly improves \texttt{simple\_greedy}. 
For small values of $k$, this algorithm requires as much time as computing ALCC only once; 
it is $N$ times faster than its predecessor. 
Furthermore, it provides a significant quality improvement over \texttt{simple\_greedy} 
in our empirical studies.

In principle, FAGA employs an adaptive strategy in computing the reduction of ALCC when nodes are removed iteratively. At each round, the node $v$ incurring the highest reduction in ALCC is selected into the solution. As shown in the proof of Theorem \ref{theo:main2}, a node $v$
does exist at any iteration. Node $v$ is removed from the graph and the 
procedure repeats itself for the remaining vertices; that is, FAGA recomputes 
for each vertex $u$, which is not yet in the solution, its ALCC reduction $\Delta\tilde C_{u}$ 
when $u$ is removed from the graph. 
This strategy provides better solution quality than the non-adaptive greedy algorithm. 
While it is more complicated than the previous approach, 
it can be done faster than \texttt{simple\_greedy} as we show in the following discussion.

We structure FAGA into two phases. The first phase (lines 1--15) extends the algorithm in \citep{Schank05} to compute both ALCC and the number of triangles that are incident with each edge and node in the graph. This algorithm was proved to be time-optimal in $\theta(M^{3/2})$ for triangle-listing, and has been shown to be very efficient in practice.
The second phase (lines 16--33) repeats the vertex selection for $k$ rounds. In each round, we select the node $u_{max}$ { \color{black} which decreases the clustering coefficient }
the most into the solution, remove $u_{max}$ from the graph, and perform the necessary update for $\Delta\tilde C_{u}$ for the remaining nodes $u \in V$.

The key efficiency of FAGA algorithm is in its update procedure for 
$\Delta\tilde C_{u}$. The update $\Delta\tilde C_{u}$ for remaining nodes 
after removing $u_{max}$ can be done in linear time. 
This is made possible due to the information on the number of triangles involving each edge. The correctness of this update formulation (lines 18--26) is proved in the following lemma.

%----------------------%
\begin{algorithm}[t]
  \caption{Fast Adaptive Greedy Algorithm (FAGA - \texttt{fast\_greedy})}
    \label{algo:fast_greedy}
  \begin{algorithmic}[1]
      \State Number the vertices from $1$ to $N$ such that $u < v$ implies $d(u) \leq d(v)$.
      \State $S \gets \emptyset$;
      \For{ each $u \in V$ }
      	 $T(u) \gets 0$;
      \EndFor
      \For{ each $(u, v) \in E$ }
      	 $tr(u, v) \gets 0$;
      \EndFor      
      
      \For{$u \gets n \textrm{ to } 1$}
       		\For{ \textrm{each } $v \in N(u)$ with $v<u$ }
       			\For{\textrm{each } $w \in A(u) \cap A(v)$}
       			\State Increase $tr(u, v), tr(v, w)$ and $tr(u, w)$ by one;
       			\State Increase $T(u), T(v)$ and $T(w)$ by  one;
       			\State Add $u$ to $A(v)$;
       			\EndFor	
       		\EndFor
      \EndFor            
      \For{  $i\gets 1$ to $k$ }
      	\For{  each $u \in V \setminus S$ }
      		\State { \color{black} $\Delta\tilde C_{u} \gets \frac{2T(u)}{Nd(u)(d(u) - 1)}$;}
      		\For{ each $v \in N(v) \setminus S$ }
                \If{ $d(v) > 2$ }
      			\State { \color{black} $\Delta\tilde C_{u} \gets \Delta\tilde C_{u} + \frac{4T(v)(1 - N) + 2tr(u,v)Nd(v) - 2T(v)d(v)}{N(N-1)d(v)(d(v)-1)(d(v)-2) } $;}
                \EndIf
                \If{ $d(v) = 2$ }
      			\State { \color{black} $\Delta\tilde C_{u} \gets \Delta\tilde C_{u} + T(v) / N$; }
                 \EndIf
      		\EndFor
      	\EndFor      	 
      	\State $u_{max} \gets \arg\max_{u \in V \setminus S}\{ \Delta\tilde C_{u} \}$;
      	\State Remove $u_{max}$ from $G$,
        add $u_{max}$ to $S$, and decrease $N$ by one;      	
      	\For{ each $ (v, w) \in E$ and $ v, w \in N(u_{max}) \setminus S$ }
      				\State Decrease $T(v)$ and $T(w)$ by one;
      	\EndFor      	
      \EndFor      
      \State \Return{$S$}			
  \end{algorithmic}	
\end{algorithm}
%----------------------%
%----------------------%
\begin{lemma} {\color{black}
Let $N_2(u) = \{ v \in N(u): d(v) = 2 \}$, 
$N_{>2}(u) = \{v \in N(u) : d(v) > 2 \}$.
%and $\Delta\tilde B_{u}(G) = \sum_{v \in G} C_u(v) - \sum_{v \in G} C(v).$
%Maximizing the difference in clustering coefficient over
%removal of a single node is equivalent to maximizing
%\[ \max_{u \in V} \Delta\tilde B_{u}(G). \]
For each $u \in V$, $\Delta\tilde C_{u}(G)$ can 
be computed in the following way:
\begin{align*}
\Delta\tilde C_{u} = \frac{2T(u)}{Nd(u)(d(u) - 1)} +& \sum_{v \in N_{>2}(u)}\frac{4T(v)(1 - N) + 2tr(u,v)Nd(v) - 2T(v)d(v)}{N(N-1)d(v)(d(v)-1)(d(v)-2) } \\ 
+& \sum_{v \in N_2(u)} \frac{T(v)}{N}
\end{align*}}
\end{lemma}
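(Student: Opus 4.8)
The plan is to unwind the definition $\Delta\tilde C_u(G)=C(G)-\tilde C_u(G)=C(G)-C\!\left(G[V\setminus\{u\}]\right)$ and re-express it purely in terms of the quantities maintained by the first phase of the algorithm, i.e.\ the values $T(\cdot)$ and $tr(\cdot,\cdot)$. Write the residual ALCC as $\tilde C_u(G)=\frac{1}{N-1}\sum_{w\in V\setminus\{u\}}\tilde C_u(w)$ and split this sum according to whether $w$ is adjacent to $u$. The first observation, already exploited in the proof of Lemma \ref{lem:lcc_pro}, is that deleting $u$ leaves the local clustering coefficient of every vertex outside $N(u)\cup\{u\}$ unchanged, since the subgraph induced on such a vertex's neighbourhood is untouched; hence $\tilde C_u(w)=C(w)$ for all $w\notin N(u)\cup\{u\}$, and the whole perturbation is concentrated in the $|N(u)|$ differences $C(v)-\tilde C_u(v)$ for $v\in N(u)$ together with the vanishing of $u$'s own coefficient. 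The non-neighbour mass is rewritten using $\sum_{w\notin N(u)\cup\{u\}}C(w)=N\,C(G)-C(u)-\sum_{v\in N(u)}C(v)$.

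The core of the proof is a per-neighbour case analysis on $d(v)$. Deleting $u$ drops $d(v)$ to $d(v)-1$ and destroys exactly those triangles through $v$ that also contain $u$, of which there are $tr(u,v)=|N(u)\cap N(v)|$ by the counting in the proof of Lemma \ref{lem:triangle}. If $d(v)\le 1$ then $C(v)=\tilde C_u(v)=0$ and $v$ contributes nothing. If $d(v)=2$ then $v$ becomes a leaf in $G[V\setminus\{u\}]$, so $\tilde C_u(v)=0$ and $C(v)-\tilde C_u(v)=C(v)=T(v)\in\{0,1\}$. If $d(v)>2$ then $\tilde C_u(v)=\dfrac{2\bigl(T(v)-tr(u,v)\bigr)}{(d(v)-1)(d(v)-2)}$, so subtracting from $C(v)=\dfrac{2T(v)}{d(v)(d(v)-1)}$ and clearing denominators yields $C(v)-\tilde C_u(v)=\dfrac{2\bigl(d(v)\,tr(u,v)-2T(v)\bigr)}{d(v)(d(v)-1)(d(v)-2)}$. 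These three regimes are exactly what generate, respectively, the (empty) leaf contribution, the sum over $N_2(u)$, and the sum over $N_{>2}(u)$ appearing in the statement.

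Finally I would substitute these pieces back into $C(G)-\tilde C_u(G)$, put each neighbour's contribution over the common denominator $N(N-1)\,d(v)(d(v)-1)(d(v)-2)$ (resp.\ over $N$ in the degree-$2$ regime), add the leading term $\dfrac{2T(u)}{N\,d(u)(d(u)-1)}=\dfrac{C(u)}{N}$ that absorbs $u$'s now-absent coefficient, and collect. Every individual identity used above is elementary, so the only real obstacle is the normalization bookkeeping: $\tilde C_u(G)$ averages over $N-1$ vertices while $C(G)$ averages over $N$, so the factor $\frac{1}{N-1}$ must be tracked carefully as it turns into the $\frac1N$ and $\frac{1}{N(N-1)}$ denominators of the claimed formula, and one should verify that the degree-$2$ term $T(v)/N$ is precisely the $d(v)\to 2$ specialization of the degree-${>}2$ expression, so that the two regimes splice together consistently. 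I would also dispatch the degenerate cases $d(u)\le 1$ and the vanishing of any denominator factor at the very start, so they do not interfere with the main computation.
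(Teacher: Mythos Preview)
Your approach mirrors the paper's almost exactly: decompose $\Delta\tilde C_u$ vertex by vertex, observe that non-neighbours of $u$ keep their local coefficient, and handle neighbours by a degree case split using $\tilde C_u(v)=\dfrac{2(T(v)-tr(u,v))}{(d(v)-1)(d(v)-2)}$ when $d(v)>2$ and $\tilde C_u(v)=0$ when $d(v)=2$. The paper phrases this via ``contributions'' $c_v=C(v)/N$ and $\hat c_v=\tilde C_u(v)/(N-1)$ with $\Delta\tilde C_u=\sum_v(c_v-\hat c_v)$, but the per-neighbour algebra and the resulting summands are identical to yours.

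You are, however, more careful than the paper at precisely the point where care is needed. The paper asserts $c_v=\hat c_v$ for every $v\notin N(u)\cup\{u\}$; this is false, since although the local coefficient $C(v)$ is unchanged, the normalization drops from $N$ to $N-1$, giving $c_v-\hat c_v=C(v)\bigl(\tfrac1N-\tfrac1{N-1}\bigr)=-\tfrac{C(v)}{N(N-1)}$. If you carry your bookkeeping through as you describe, you will find that the right-hand side in the lemma equals
\[
\Delta\tilde C_u \;+\; \frac{1}{N(N-1)}\sum_{w\notin N(u)\cup\{u\}} C(w)
\]
rather than $\Delta\tilde C_u$ exactly. A sanity check: take two disjoint triangles, $N=6$, and remove any vertex; the true drop is $1-\tfrac35=\tfrac25$, while the displayed formula yields $\tfrac16+\tfrac16+\tfrac16=\tfrac12$, off by exactly $\tfrac{3}{6\cdot 5}=\tfrac1{10}$. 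So your plan is sound and coincides with the paper's, but do not expect the algebra to close up to the stated identity: the identity as written omits the non-neighbour correction that your normalization tracking will produce, and the paper's proof simply drops it.
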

%----------------------%
%----------------------%
\begin{proof}
Denote the contribution of $v \in G$ to the average clustering coefficient as $c_v$ before the removal of $u$
and $\hat{c}_v$ after.
$\Delta\tilde C_{u}$ can be written as 
$\sum_{v \in G} c_v - \hat{c}_v.$
If $v \not \in N(u) \cup \{ u \}$, then $c_v = \hat{c}_v$. 
If $v = u$, then {\color{black}
$$c_v - \hat{c}_v = \frac{2T(u)}{Nd(u)(d(u) - 1)}.$$}
Let $v \in N_{>2}(u)$.  Then before removal of $u$, $v$ is in $T(v)$ triangles.  After removal,
$v$ is in $T(v) - tr(u,v)$ triangles.
Hence {\color{black}
$$c_v = \frac{2T(v)}{Nd(v)(d(v) - 1)},$$}
and {\color{black}
$$\hat{c}_v = \frac{2(T(v) - tr(u,v))}{(N-1)(d(v) - 1)(d(v) - 2)},$$} whence {\color{black}
\[c_v - \hat{c}_v = \sum_{v \in N_{>2}(u)}\frac{4T(v)(1 - N) + 2tr(u,v)Nd(v) - 2T(v)d(v)}{N(N-1)d(v)(d(v)-1)(d(v)-2) }.\]}
Let $v \in N_{2}(u)$. Before removal of $u$, $v$ is in $T(v)$ triangles. 
After removal, $v$ is in 0 triangles, hence the result follows.
\end{proof}

One important feature of FAGA is that the produced residual ALCC values will form a nonincreasing sequence. 
This feature is summarized in the following theorem.
%--------------------------%
%--------------------------%
%----------------------%
\begin{theorem}
The ALCC values of networks after each iteration (Alg.~\ref{algo:fast_greedy}, lines 16 -- 28) form a non-increasing sequence.
\label{theo:main2}
\end{theorem}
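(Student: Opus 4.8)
The plan is to reduce the theorem to a single per-iteration inequality and then close it with Theorem~\ref{theo:main1}. Write $G_0 = G$, and for $i \ge 1$ let $G_i$ be the graph held by the algorithm at the start of the $(i{+}1)$-st pass of the outer loop (lines~16--33); by construction $G_{i+1}$ is obtained from $G_i$ by deleting the single vertex $u_{max}$ chosen in round $i{+}1$, and added to $S$. Since $C(G_0), C(G_1), \dots$ is exactly the sequence of residual ALCC values in question, it suffices to show $C(G_{i+1}) \le C(G_i)$ for every $i$, i.e. that the vertex the algorithm removes in each round satisfies $\tilde C_{u_{max}}(G_i) \le C(G_i)$.

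I would establish this from two ingredients. First, an invariant: at the start of round $i{+}1$ the stored quantities $T(u)$ and $tr(u,v)$ equal, respectively, the number of triangles of $G_i$ through vertex $u$ and through edge $(u,v)$. For $i=0$ this is the correctness of the triangle-listing phase (lines~6--15), which is the algorithm of \citep{Schank05}; for the inductive step one checks that deleting $u_{max}$ destroys precisely the triangles $\{u_{max}, v, w\}$ with $v,w \in N(u_{max})$ and $(v,w) \in E$, which is exactly what lines~29--31 compensate for in the triangle counts. Granting this invariant, the lemma proved immediately above shows that the value $\Delta\tilde C_{u}$ accumulated in lines~18--26 equals the true reduction $C(G_i) - \tilde C_{u}(G_i)$ for every candidate $u$. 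Second, greedy optimality together with Theorem~\ref{theo:main1}: applied to $G_i$, that theorem gives
\[ \frac{1}{|V(G_i)|}\sum_{u \in V(G_i)} \tilde C_u(G_i) \;=\; E_u[\tilde C_u(G_i)] \;\le\; C(G_i), \]
so some vertex of $G_i$ has $\tilde C_u(G_i) \le C(G_i)$, i.e. $\Delta\tilde C_u \ge 0$; since line~27 picks $u_{max} = \arg\max_u \Delta\tilde C_u$, we get $\Delta\tilde C_{u_{max}} \ge 0$ and hence
\[ C(G_{i+1}) = \tilde C_{u_{max}}(G_i) = C(G_i) - \Delta\tilde C_{u_{max}} \le C(G_i). \]
Chaining over $i = 0, 1, \dots, k-1$ yields the non-increasing sequence.

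I expect the only real obstacle to be the bookkeeping invariant in the first ingredient: one must verify that the cheap, purely local updates performed after each deletion (lines~29--31, and the analogous adjustments to the $tr$ values) keep all triangle counts consistent with the residual graph, so that the closed form of the preceding lemma is legitimately applicable at \emph{every} iteration and not merely at the first. Once that is in place, the remainder is a one-line appeal to the lemma followed by a one-line appeal to Theorem~\ref{theo:main1}, and a trivial induction on the round index.
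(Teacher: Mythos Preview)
Your approach is essentially the paper's: show via Theorem~\ref{theo:main1} that in any graph some vertex $u$ satisfies $\tilde C_u(G)\le C(G)$, then observe that the greedy choice $u_{max}$ does at least as well, so each round can only decrease ALCC. The paper's proof is in fact terser than yours---it argues the existence of such a $u$ by contradiction and then simply asserts that ``at each step we select the node that maximally degrades ALCC,'' taking the correctness of the per-round computation of $\Delta\tilde C_u$ for granted rather than verifying the bookkeeping invariant you single out; your added care about the triangle-count updates (including the $tr$ values) is a genuine tightening, not a deviation.
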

%----------------------%
%%\vspace{-0.1in}
\begin{proof}
We first show that in a graph $G$, there always exists a node $u$ such that $\tilde C_u(G) \leq C(G)$. Assume otherwise, that is $\tilde C_v(G) > C(G)$ for all node $v \in V$. This implies $\sum_{v \in V}\tilde C_v(G) > N \times C(G)$ which contradicts Theorem \ref{theo:main1}. Thus, the statement holds true. Finally, the theorem follows because at each step we select the nodes that maximally degrades ALCC of the whole network.
\end{proof}
%----------------------%
\textit{Time-complexity}: The first phase takes $O(M^{3/2})$ as in \citep{Schank05}. The second phase takes a linear time in each round and has a total time complexity $O(k (N+M) )$. Thus, the overall complexity is $O(M^{3/2} + k (M+N) )$. When $k < M^{1/2}$, the algorithm has an effective time-complexity $O(N^{3/2})$, which is $N$ times faster than \texttt{simple\_greedy}.

{ \color{black} \section{Clustering and the spread of information} \label{sect:rel_gcc}
In this section, we provide additional evidence
for the relationship between the propagation of 
information in a social network and the average
network clustering.  Since information cannot propagate
from one connected component to another, we consider
this relationship when the graph $G$ representing
the social network is connected. Thus, we consider
connected graphs with different values of ALCC. 
We define the relevant models of influence propagation
in Section \ref{sect:rel_mod};
then, we demonstrate
an empirical relationship in Section \ref{sect:rel_exp}; 
next, we provide theoretical evidence in support of
this relationship
in Section \ref{sect:rel_theo}.
\subsection{Models of influence} \label{sect:rel_mod}
To observe the effect of ALCC on influence propagation,
we adopted the following two standard models \citep{Kempe2003};
intuitively, the idea of a model 
of influence propagation in a network is
a way by which nodes can be activated 
given a set of seed nodes. 
%In this work, we 
%use $\sigma$ to denote a model of influence propagation.
%Such a model is usually probabilistic, and
%the notation $\sigma(S)$ will denote the
%expected number of activations under the model $\sigma$
%given seed set $S \subseteq V$.  
An instance 
of influence propagation on a graph $G$ follows the independent cascade (IC) 
model if a weight can be assigned
to each edge such that the propagation probabilities can be computed as
follows: once a node $u$ first becomes active, 
it is given a single chance to activate each currently inactive neighbor $v$ with probability proportional to 
the weight of the edge $(u,v)$. 
In the linear threshold (LT) model each network user $u$ has an associated threshold $\theta(u)$ chosen uniformly from $[0,1]$ which determines how much influence (the sum of the weights of incoming edges) is required to activate $u$. $u$ becomes active if the total influence from its active
neighbors exceeds the threshold $\theta(u)$.
\subsection{Experimental evidence} \label{sect:rel_exp}
To test the relationship between influence 
propagation and clustering empirically, we
used a variety of Watts-Strogatz graphs \citep{watts1998cds};
a graph generated by this model starts as a ring
lattice, defined as follows. First, $n$ circular rings are constructed: for 
each $j \in \{ 1, \ldots, n \}$, vertices
$u_1^j, \ldots, u_n^j$ and edges $(u_{i}^j, u_{i + 1}^j)$, $i = 1, \ldots, n - 1$,
and $(u_n^j, u_1^j)$. Next, add edges $(u_{i}^{j}, u_i^{j + 1})$, for $j = 1, \ldots, n-1$,
and $(u_i^{n}, u_i^1)$, for each $i$. Finally, all vertices within $k$ hops of each
other are connected by an edge. For these experiments, we used $n = 100$ and $k = 3$.
With probability $p$, each edge
in the graph is rewired; that is, replaced
with an edge between two uniformly randomly
chosen vertices. By varying $p$, one can
control the level of clustering in the
network, as shown in
Fig. \ref{fig:rel_gcc}. Each graph generated
in this manner has the same number of edges.

The expected activation was computed
using a single seed node and an IT or LT realization; this
computation was averaged over 1000 trials. When we normalize by
the initial value, Fig. \ref{fig:rel_gcc} shows a remarkable similarity
between the normalized ALCC value and the normalized activations, for
both the IC and LT models. Therefore, these results provide evidence
supporting a positive correlation between ALCC and 
the expected activation of both the IC and LT models of information propagation.

\subsection{Theoretical evidence of relationship between ALCC and influence propagation} \label{sect:rel_theo}
In this section, we provide further
evidence supporting
the relationship between clustering
and influence propagation, in the form of the 
following proposition, which shows how the
probability of activation increases when 
more neighbors are shared; with higher ALCC,
we may expect a higher fraction of shared neighbors
between adjacent nodes.

\begin{proposition} Suppose $s$ is
activated; let $t$ be a neighbor of $s$,
and suppose $s,t$ share $k$ neighbors.
Consider the IC model with 
uniform probability $1 / 2$ on 
each edge. Then 
\[ Pr \left( \text{ $t$ becomes activated } \right) \ge 1 - (1/2) \cdot (3/4)^k . \]
\end{proposition}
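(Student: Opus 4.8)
The plan is to bound the probability that $t$ fails to be activated, and then take the complement. There are two disjoint ways $t$ can become active under the IC model: directly, via the edge $(s,t)$, or indirectly, through one of the $k$ common neighbors that $s$ activates and which in turn activates $t$. First I would isolate the direct channel: since $s$ is active, the edge $(s,t)$ succeeds with probability $1/2$, so $t$ is activated this way with probability $1/2$, and fails on this channel with probability $1/2$. I would then condition on this direct attempt failing, since that is the only case where the common neighbors matter.

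Next I would handle the $k$ shared neighbors $w_1, \dots, w_k$. For a fixed $w_i$, two independent coin flips must both come up the right way for $w_i$ to serve as an $s$-to-$t$ relay: the edge $(s, w_i)$ must succeed (probability $1/2$) so that $w_i$ gets activated by $s$, and then the edge $(w_i, t)$ must succeed (probability $1/2$) so that $w_i$ passes the activation to $t$. Hence the probability that $w_i$ does \emph{not} relay is at most $1 - (1/2)(1/2) = 3/4$. Because the edge events for distinct $w_i$ involve disjoint sets of edges, these relay-failure events are independent (one should note we are lower-bounding $t$'s activation probability, so it is legitimate to ignore the possibility that $w_i$ gets activated along some longer path — only edge successes help us here). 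Therefore the probability that \emph{none} of the $k$ shared neighbors relays the activation is at most $(3/4)^k$.

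Combining the two channels: $t$ fails to be activated only if the direct edge fails (probability $1/2$) \emph{and}, conditional on that, every common-neighbor relay fails (probability at most $(3/4)^k$, by the independence just noted — the direct-edge flip is also independent of the common-neighbor edges). So
\[ Pr(\text{$t$ not activated}) \le \frac{1}{2} \cdot \left( \frac{3}{4} \right)^{k}, \]
and taking complements gives $Pr(\text{$t$ activated}) \ge 1 - (1/2)(3/4)^k$, as claimed.

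The main subtlety — not an obstacle so much as a point requiring care — is the independence bookkeeping: one must check that the edge $(s,t)$, the edges $(s,w_i)$, and the edges $(w_i,t)$ are all distinct (true since $s \ne t$ and the $w_i$ are distinct from $s,t$ and from each other), so that all the relevant Bernoulli trials are mutually independent. The other point worth a sentence is that we are deriving a \emph{lower} bound, so it is safe to discard any additional activation pathways (longer paths, or $w_i$ being activated indirectly): ignoring favorable events can only decrease our estimate of $Pr(\text{$t$ activated})$, keeping the inequality valid.
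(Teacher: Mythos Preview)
Your proof is correct and follows essentially the same decomposition as the paper: both lower-bound the activation probability by considering only the direct edge $(s,t)$ together with the length-two paths through the $k$ shared neighbors, relying on independence of the edge trials. The only cosmetic difference is that the paper computes $Pr\bigl(\bigcup_i A_{n_i}\bigr)$ via inclusion-exclusion rather than via the complement, arriving at the same value $1-(3/4)^k$.
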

\begin{proof}
  Let $A$ be the event that 
  edge $(s,t)$ exists, and let
  $B$ be the event that
  edge $(s,t)$ does not exist,
  but for a common neighbor $n$,
  the edges $(s,n),(n,t)$ exist.
  For each common neighbor $n$,
  let $A_n$ be the event 
  that both edges
  $(s,n),(n,t)$ exist. Then
  \begin{align*} 
    Pr \left( \text{ $t$ becomes activated } \right)
    &\ge Pr(A) + Pr(B) \\
&= 1/2 + 1/2 \cdot Pr \left( \bigcup_{n \in N(s) \cap N(t)}  A_n  \right) . 
  \end{align*}

  Notice that $Pr ( A_n ) = 1/4$, and
  let $N(s) \cap N(t) = \{ n_1, \ldots, n_k \}$.
  By the inclusion-exclusion principle,
  we have that 
  \[ Pr \left( \bigcup_{i=1}^k A_{n_{k}} \right) =
  \sum_{i = 1}^k { {k} \choose {i} } 
  (1/4)^i (-1)^{i + 1} = 1 - (3/4)^k. \]
  Therefore, 
  $Pr(A) + Pr(B) = 1 - (1/2) \cdot (3/4)^k$.

\end{proof} }

\section{Experimental evaluation} \label{ssexperiment}

%%\vspace{-0.1in}
We present the empirical results of our proposed algorithms on synthesized and real networks. { \color{black}
In Section \ref{sect:exp_data}, we describe our methodology; in Section \ref{sect:exp_acc},
\ref{sect:exp_lcc} we analyze the 
efficacy of degrading ACC, LCC, respectively;
in Section \ref{sect:exp_rt}, we analyze the running
time of the algorithms.}

%All methods are executed with update after each node removal.
%--------------------------%
{ \color{black}
\subsection{Methodology} 
\label{sect:exp_data}
\paragraph{Algorithms}
We are unaware of any competitive method that specifically minimizes ALCC,
so to evaluate our approaches we compare to the following strategies: 
\begin{itemize}
\item \texttt{random\_fail}: Remove nodes uniformly at random, 
\item \texttt{lcc\_greedy}: Remove nodes in greedy fashion 
according to highest local clustering coefficient, 
\item \texttt{max\_degree}: Remove nodes in greedy fashion 
according to highest degree, 
\item \texttt{betweenness}: Removes in greedy fashion 
according to the highest betweenness centrality. 
\item \texttt{optimal}: For a network with 35 nodes,
  we were able to compute the optimal solution to 
  CVA by exhaustive enumeration.
\end{itemize}
Method legends are described in Fig. \ref{fig:legend}.}
\paragraph{Datasets}
We use Erd\H{o}s-R\'enyi (ER) \citep{Erdos60onthe}, 
Watts-Strogatz (WS) \citep{watts1998cds}, 
and Barabasi-Albert (BA)\citep
{Albert2002} models to generate synthesized testbeds. 
These are foundational models which have been widely used in the literature.
%% ; we
%% briefly describe the WS model:
%% the WS model begins from a regular grid or lattice; with some probability $p$ each
%% edge in the graph is rewired to a random edge.
We used the following parameter values:
$N=10000 ,M=49772$, $p = 0.001$ (ER model); 
$N = 35$, $p = 0.2$ (ER model);
$N=15000, M=44994$ (BA model); 
and $N=10000, M=200000$, {\color{black} with $n = 100$, $k = 3$, and $p = 0.3$, these
parameters are defined in Section \ref{sect:rel_gcc} (WS model).}

Real-world traces include Facebook \citep{facebookdataset}, ArXiv ePrint citation \citep{arxivdataset}, and NetHEPT networks \citep{chen10kdd}. 
The trace of Facebook has 25,492 users and 
464,237 friendship links, 
NetHEPT has 15,234 authors
with 31,376 connections, and 
ArXiv has 26,197 nodes with 14,484 edges.
The parameter $k$ is set to a fraction of the total number of nodes in each graph.
Besides ALCC, we also evaluate how the removal of
critical nodes affects the maximum Local Clustering Coefficient (LCC).

%--------------------------%
\subsection{Results on Average Clustering Coefficient} \label{sect:exp_acc}

\begin{figure*}
  \centering
	\subfigure[Erd\H{o}s-R\'enyi] {
		\includegraphics[width=0.4\textwidth]{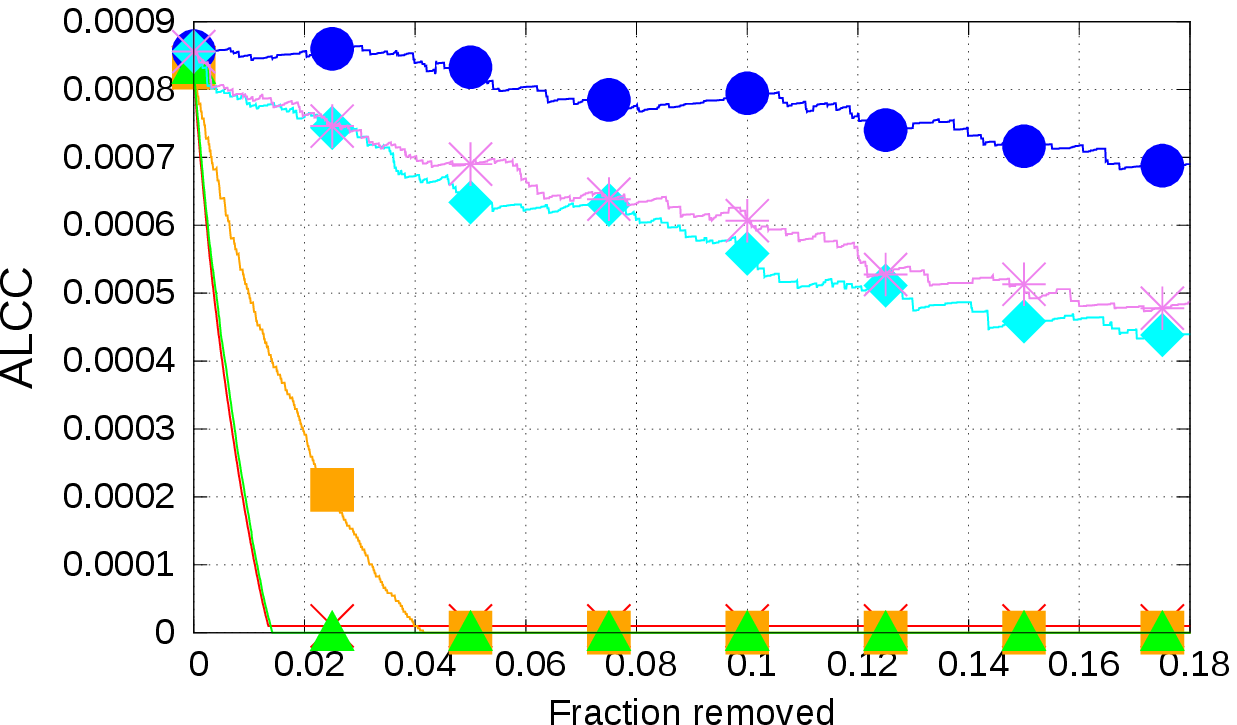}
		\label{fig:gccER}
	}  
  \subfigure[Watz-Strogatz] {
		\includegraphics[width=0.4\textwidth]{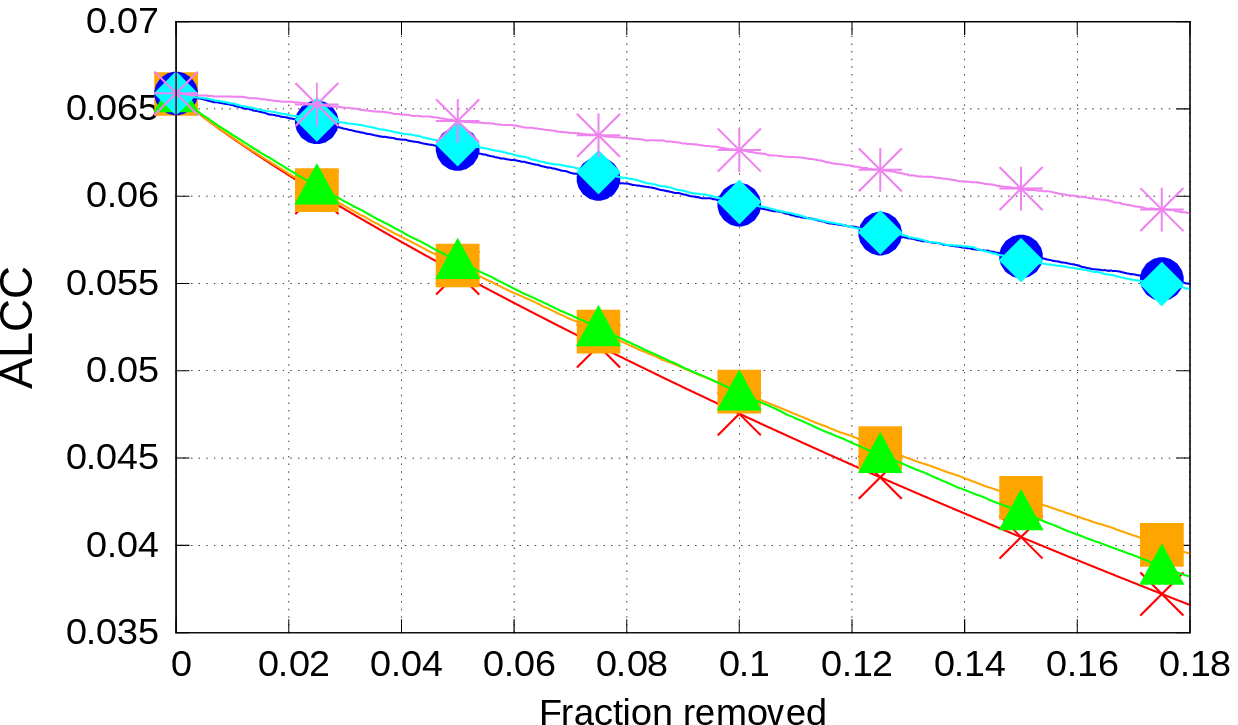}
		\label{fig:gccWS}
	}
  
  \subfigure[Barab\'asi-Albert] {
		\includegraphics[width=0.4\textwidth]{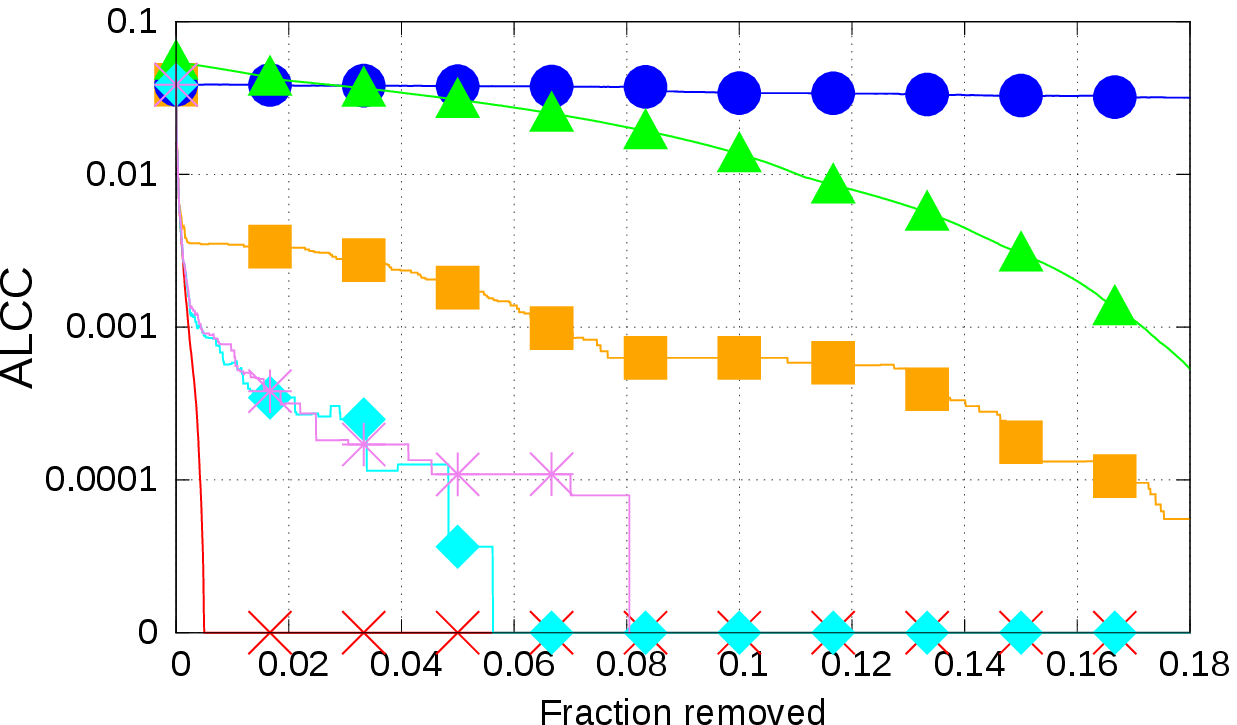}
		\label{fig:gccBA}
	}
	\subfigure[Arxiv] {
		\includegraphics[width=0.4\textwidth]{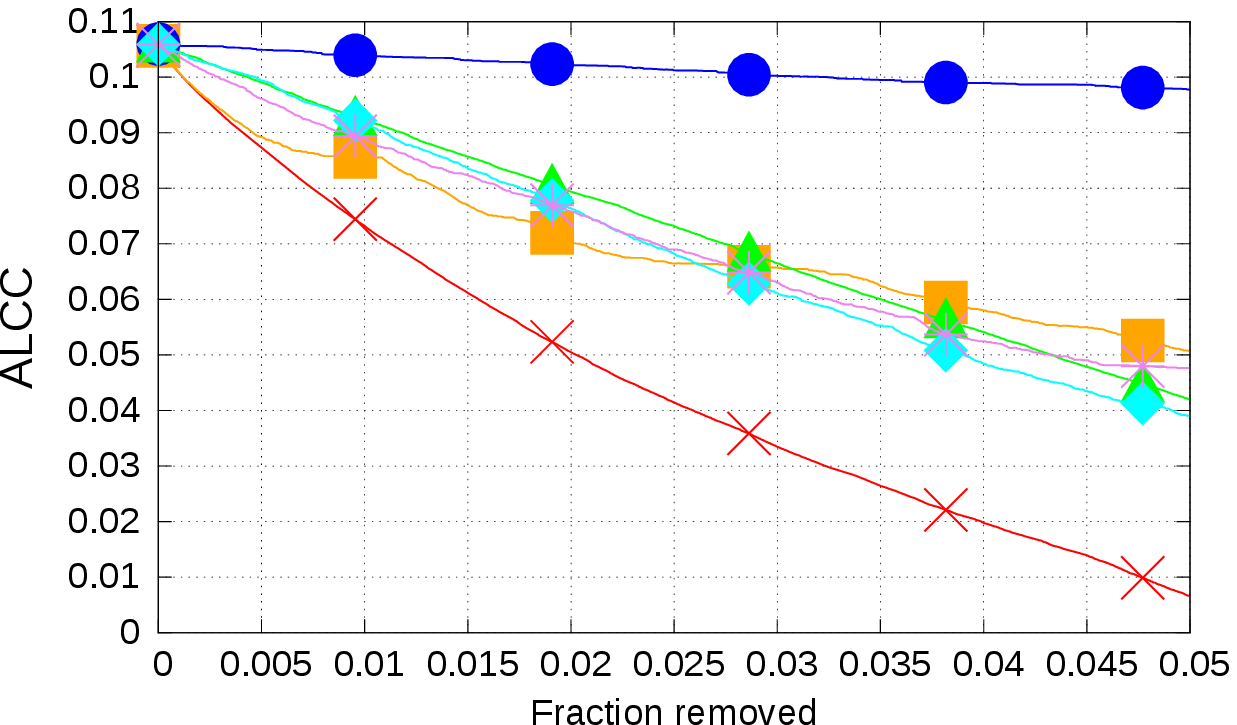}
		\label{fig:gccArxiv}
	}  

  \subfigure[NetHEPT] {
		\includegraphics[width=0.4\textwidth]{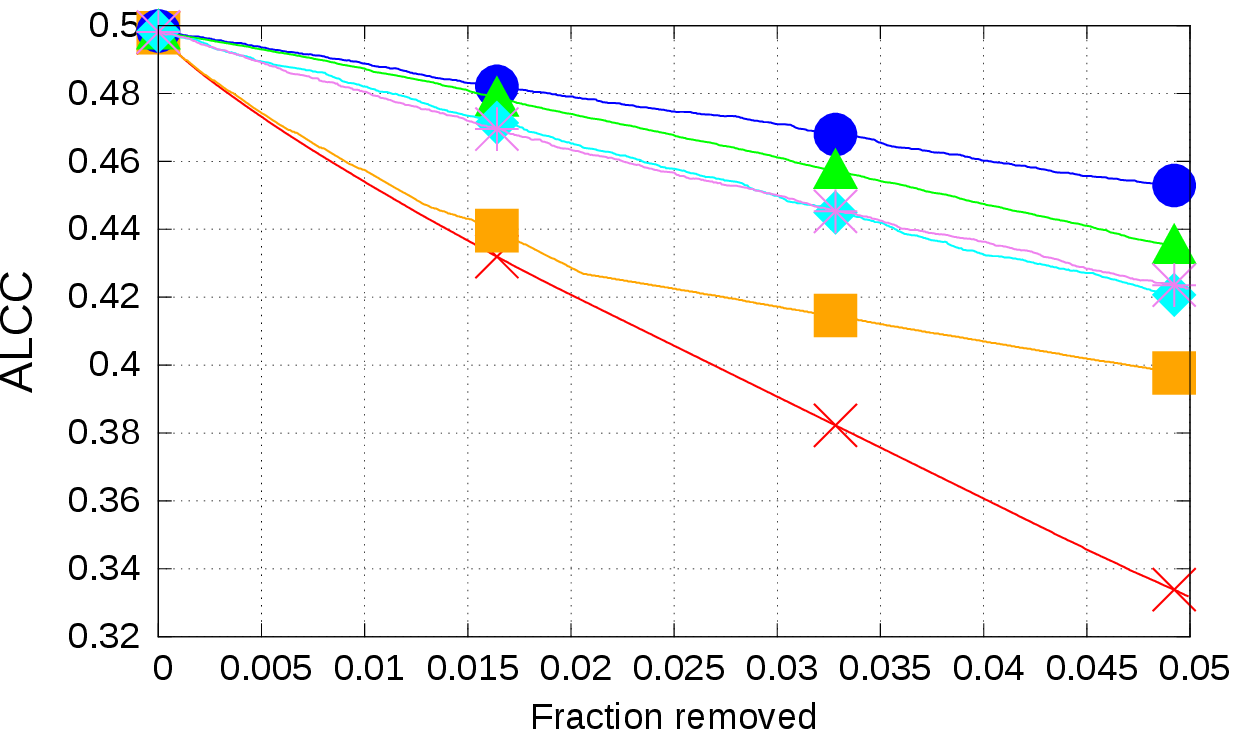}
		\label{fig:gccNET}
	}
  \subfigure[Facebook] {
		\includegraphics[width=0.4\textwidth]{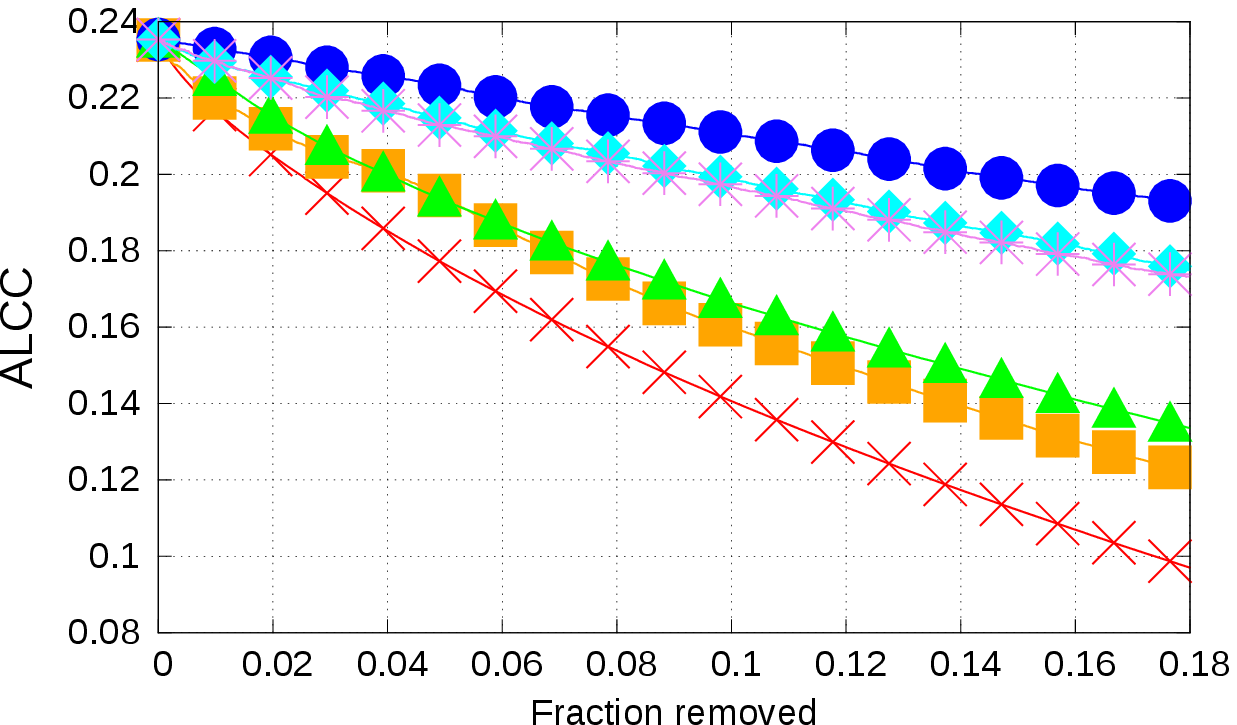}
		\label{fig:gccFB}
	}

  \subfigure[{\color{black}Erd\H{o}s-R\'enyi 35}] {
		\includegraphics[width=0.4\textwidth]{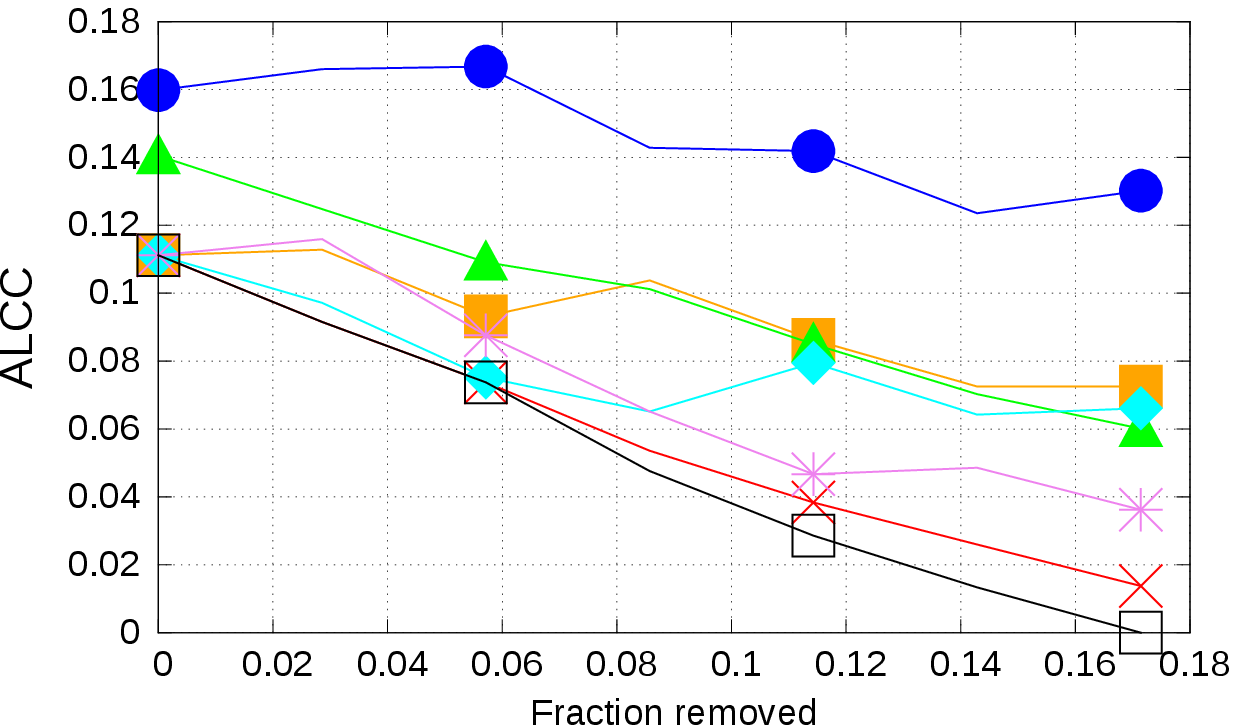}
		\label{fig:gccer35}
	}
  \subfigure[Method legends] {
		\includegraphics[scale=0.7]{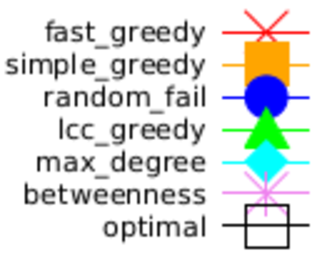}
		\label{fig:legend}}

	\caption{Average clustering coefficients (lower is better).}
	\label{fig:gcc}
	%\vspace{-0.25in}
	\end{figure*}
\begin{figure}[ht]
  \centering
  \subfigure[Critical nodes] {
    \includegraphics[scale=0.25]{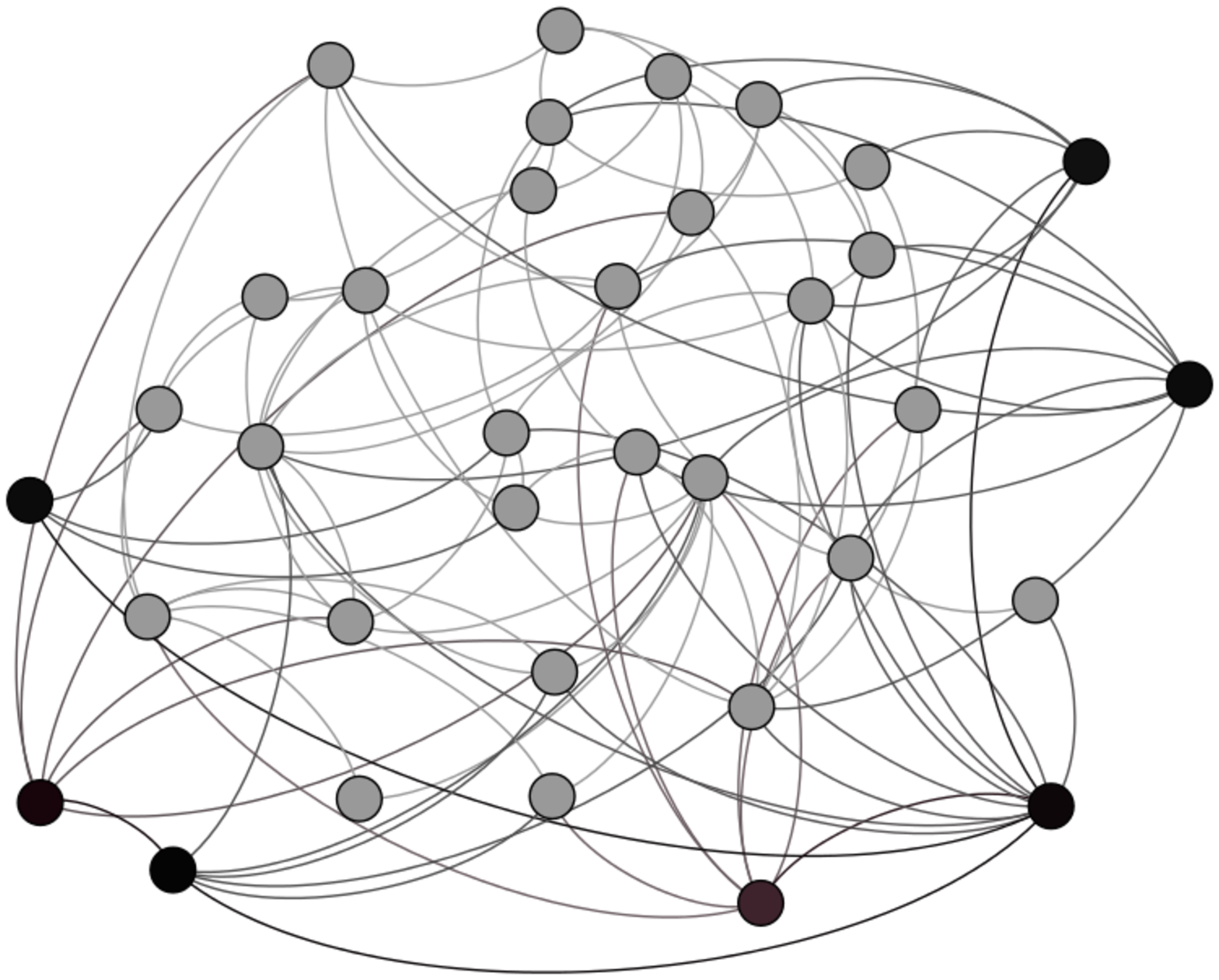}
  }
  \subfigure[Residual graph] {
    \includegraphics[scale=0.25]{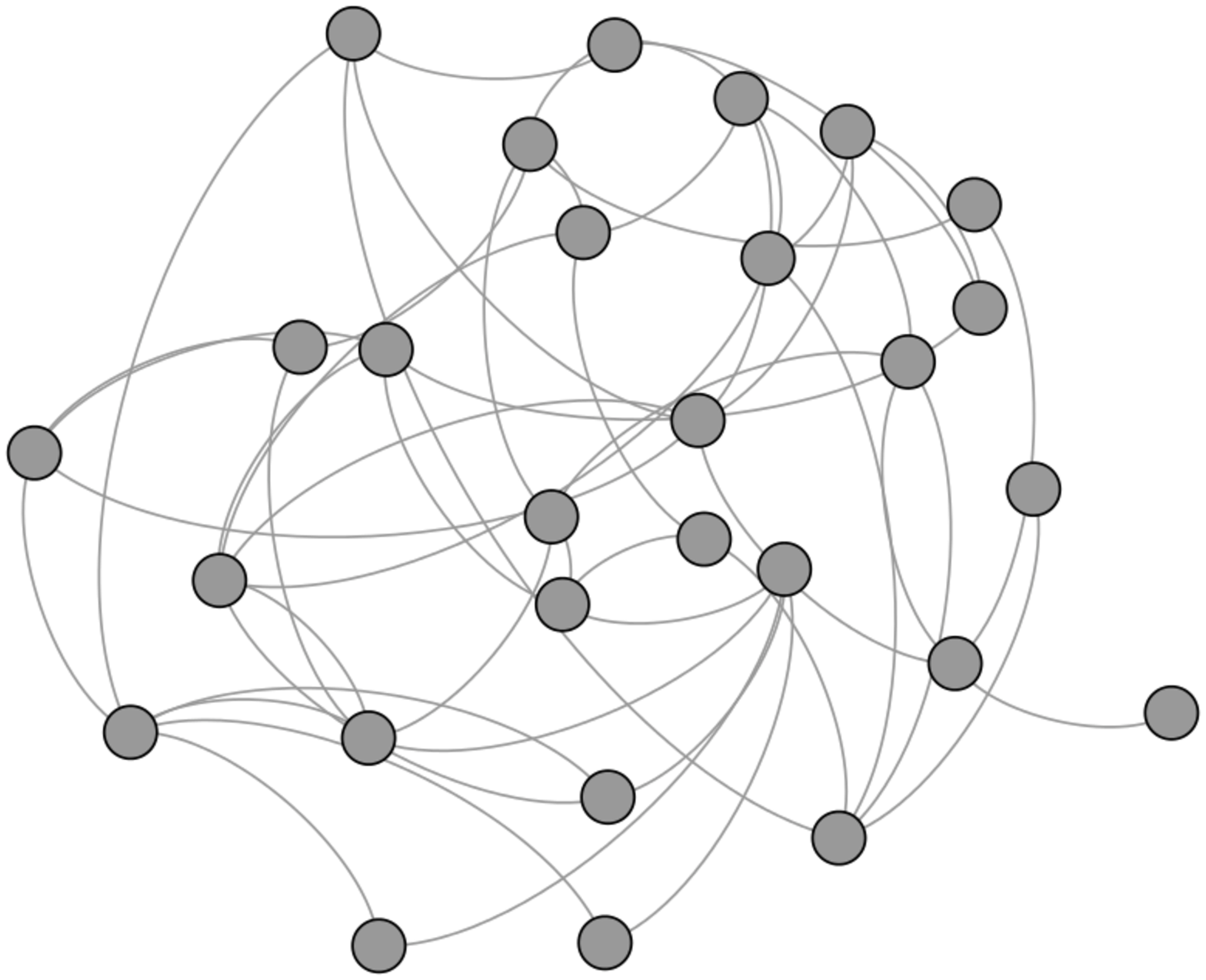}
  } \caption{ {\color{black}The optimal solution (black nodes) on the Erd\H{o}s-R\'enyi network with 35 nodes,
  with $k = 7$. Notice that the residual graph after removal of the optimal solution is triangle-free.}}
  \label{fig:er35opt}
\end{figure}
In this section, we present results on the efficacy of the various algorithms to
lower the ALCC.
We observe 
(1) the performance of our algorithms in view of other strategies, and more importantly 
(2) the critical behavior of clustering coefficient when 
crucial nodes are removed by different criteria. The empirical results on synthesized and real data are presented in Fig. \ref{fig:gcc}.

%--------------------------%
\begin{figure}[t]
  \centering
  \subfigure[Erd\H{o}s-R\'enyi] {
    \includegraphics[width=0.4\textwidth]{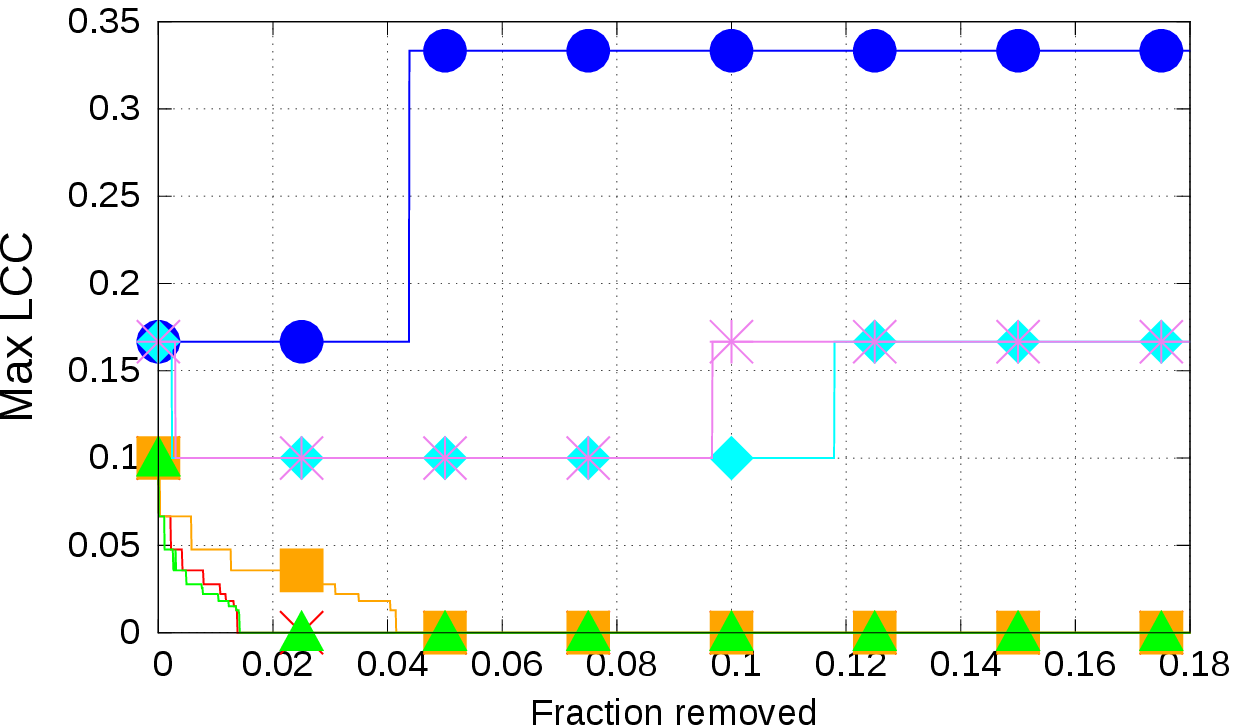}
    \label{fig:lccER} }
  \subfigure[Watz-Strogatz] {
    \includegraphics[width=0.4\textwidth]{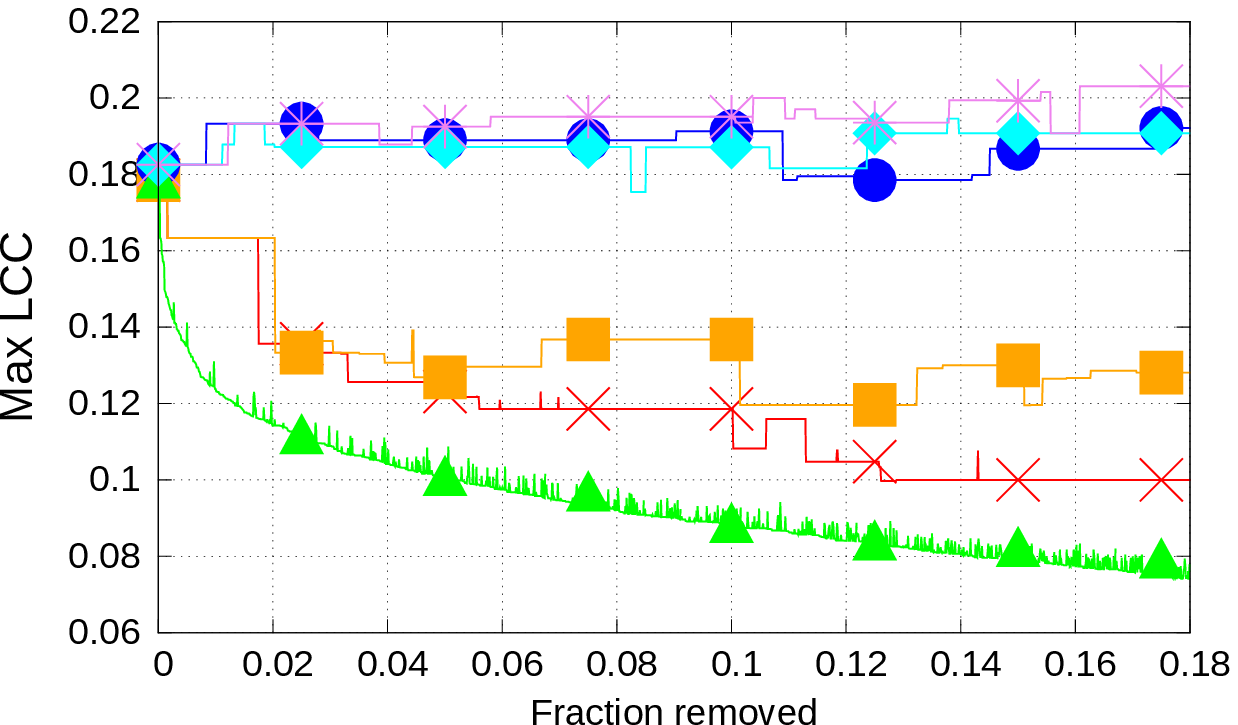} \label{fig:lccWS} }

  \subfigure[Barab\'asi-Albert] {
    \includegraphics[width=0.4\textwidth]{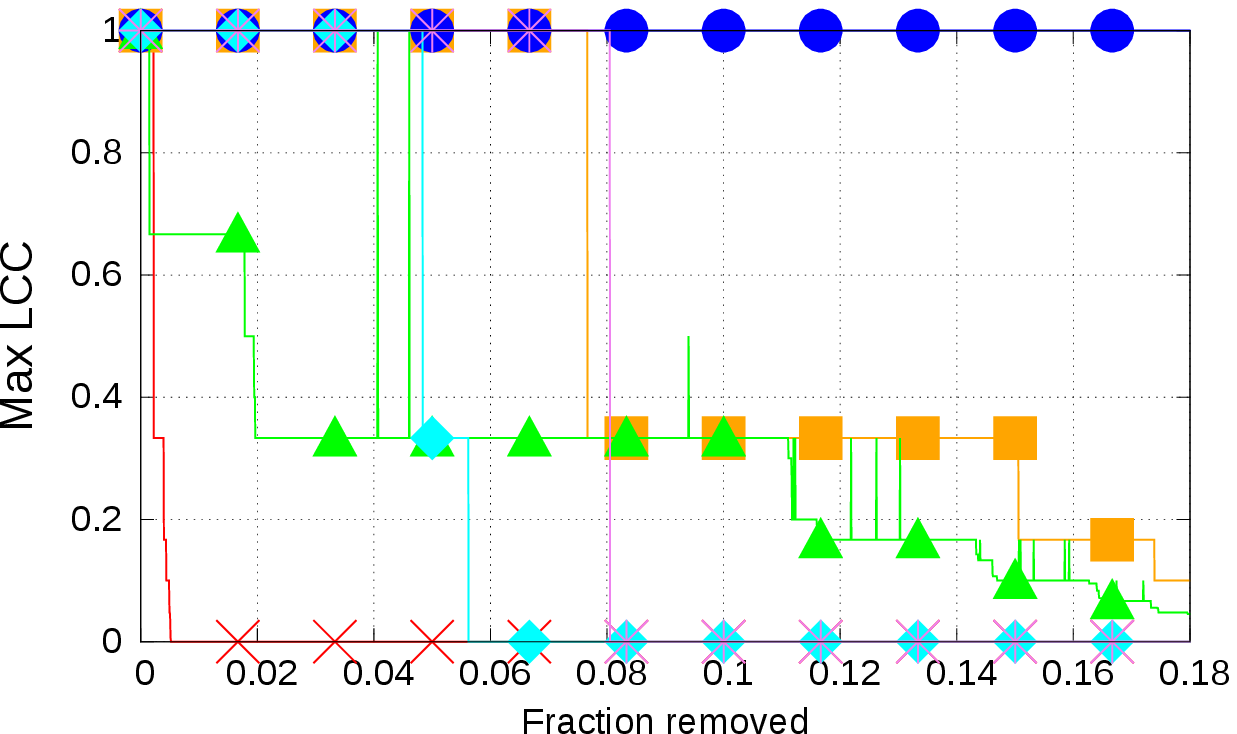} \label{fig:lccBA} }
  \subfigure[Arxiv] {
    \includegraphics[width=0.4\textwidth]{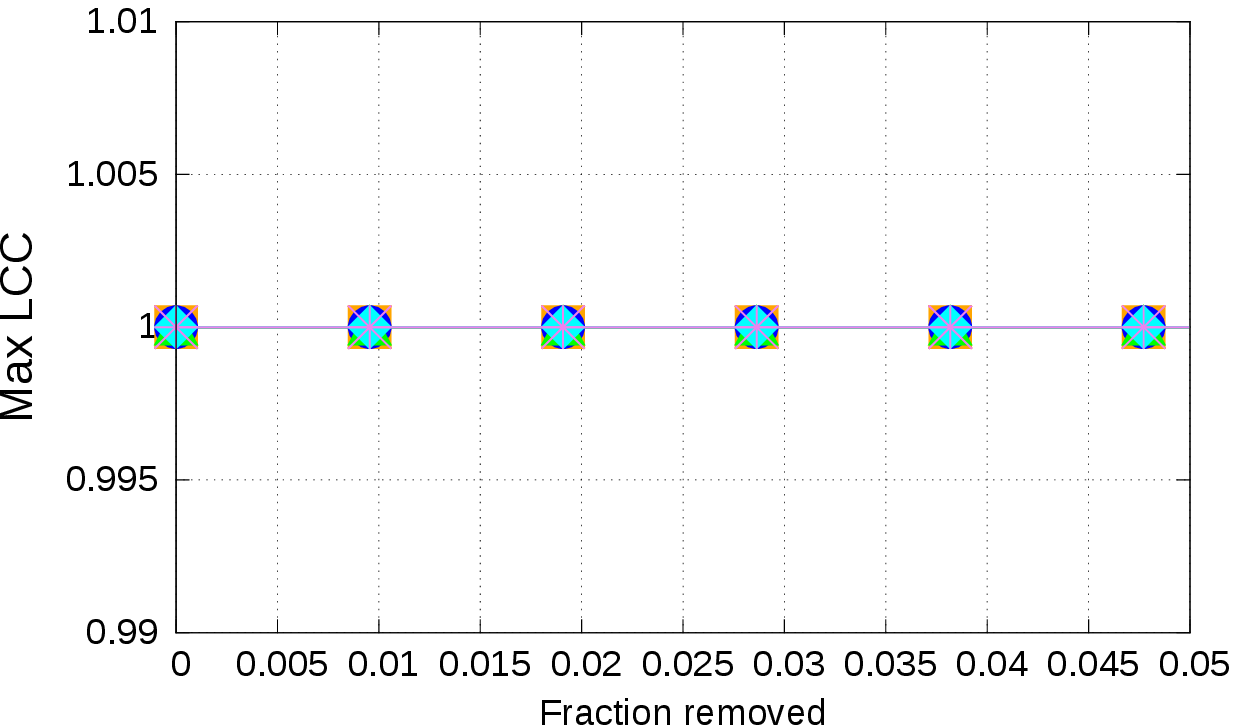} \label{fig:lccArxiv} }  

  \subfigure[NetHEPT] {
    \includegraphics[width=0.4\textwidth]{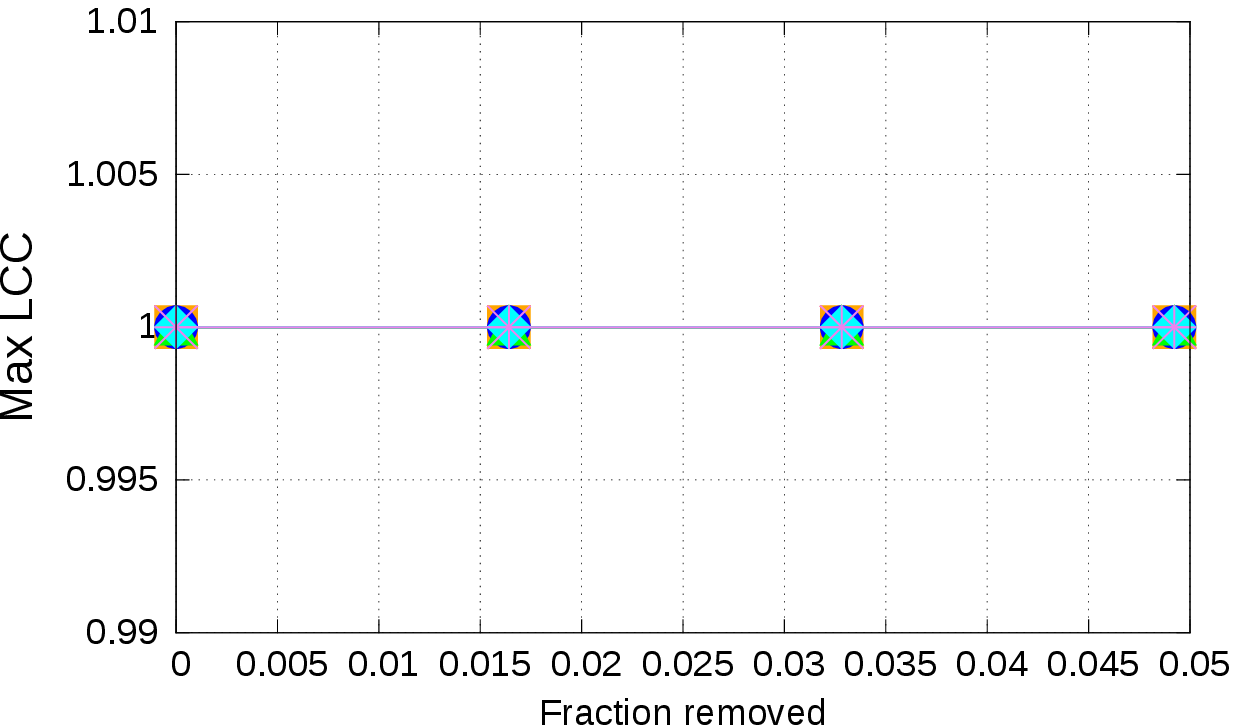} \label{fig:lccNET} }
  \subfigure[Facebook] {
    \includegraphics[width=0.4\textwidth]{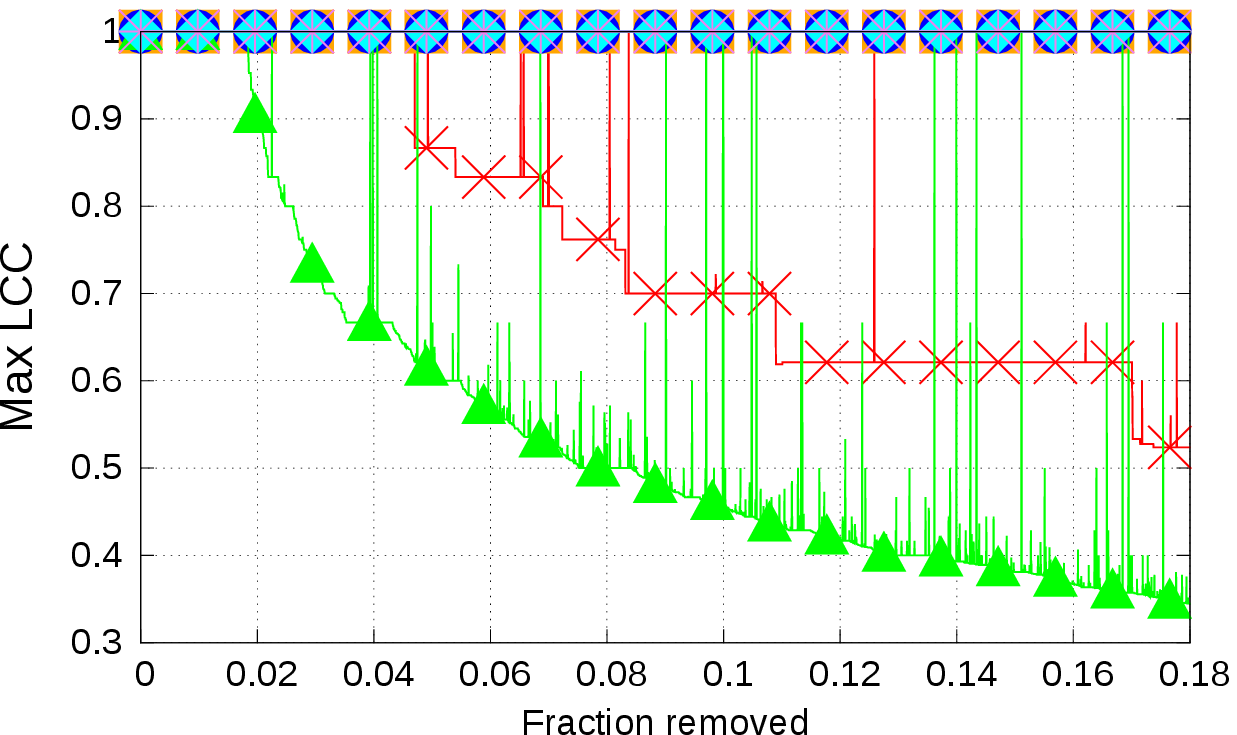}
    \label{fig:lccFB} }
  %\vspace{-0.1in}
  \caption{Maximum local clustering coefficients (lower is better). For legend, see Fig. \ref{fig:legend}.}
  \label{fig:lcc}
\end{figure}
%--------------------------%

As depicted in the subfigures, ALCC values produced by our algorithm \texttt{fast\_greedy} are consistently the best (lowest) values in all test cases, except in
the ER network with 35 nodes where \texttt{optimal}
was able to run. A visualization of the optimal solution
on this network for $k = 7$ is shown in Fig. \ref{fig:er35opt}. 
In the ER network with 10000 nodes, 
\texttt{fast\_greedy}, \texttt{lcc\_greedy} and \texttt{simple\_greedy} methods quickly destroy clustering as soon as 0.02 fraction of nodes (on \texttt{fast\_greedy} and \texttt{lcc\_greedy}) and 0.05 fraction of nodes (on \texttt{simple\_greedy}) 
are excluded from the networks.
Interestingly, \texttt{max\_degree} and \texttt{betweenness} methods do not appear much better 
than the baseline \texttt{random\_failure} 
method especially for \texttt{betweenness}. 
A possible explanation for this is the 
independence and equal probability of wiring edges in ER model. Moreover, because ER model neither generates triadic closures nor forms hubs, the network structure might be easily broken when a few random but important nodes are removed.

In WS model, we observe the same degrading behavior of ALCC value produced by all methods with \texttt{fast\_greedy} outperforming \texttt{lcc\_greedy} and \texttt{simple\_greedy} methods. Also in this model, these three methods outperformed the rest by a large magnitude. In BA model, \texttt{fast\_greedy} still performs best,
closely followed by \texttt{max\_degree} and \texttt{betweenness} methods. As BA model generates graphs with references given to the power-law distribution (i.e., forming hubs) the performance of \texttt{max\_degree} and \texttt{betweenness} can be explained. \texttt{lcc\_greedy} does not do well in this type of network as it takes a considerable fraction of total nodes in order to degrade the average clustering coefficient.

In conclusion, \texttt{fast\_greedy} is the best approach that consistently discovers nodes that are most important to the network clustering. The experiments also suggest that \texttt{max\_degree} and \texttt{betweenness}, despite their popularity, might not be ideal methods to analyze structural vulnerability of complex networks. In addition, these experiments also show that (1) ALCC isn't very susceptible to random failures, and (2) network clusters generated by ER, WS and BA can potentially be vulnerable to targeted attacks as the respective ALCC can quickly be impaired when only a few vertices are removed from the graphs.

In real data, the superior nature of \texttt{fast\_greedy} becomes more visible as it beats other strategies by a significant gap. In real traces, \texttt{max\_degree} and \texttt{betweenness} perform similarly while \texttt{lcc\_greedy} and \texttt{simple\_greedy} methods fluctuate in between. \texttt{random\_failure}, unsurprisingly, remains the worst. We observe that even in big real networks, \texttt{fast\_greedy} performs very well by degrading the ALCC dramatically (nearly 90\%, 33\% and 55\% of ALCC decrement on ArXiv, NetHEPT and Facebook) as more nodes are excluded from the data. This fact implies that those practical systems, despite their complex structure and functionality, commonly expose their clustering vulnerability to targeted or adversarial attacks. Our proposed approach \texttt{fast\_greedy} effectively discovers the critical nodes with high impact to those network structures. The results also demonstrate that \texttt{simple\_greedy} and \texttt{lcc\_greedy} are also good options though they require long execution time as we show 
below.
%--------------------------%
\subsection{Maximum Local Clustering Coefficient} \label{sect:exp_lcc}
We next examine the maximum local clustering coefficient (max-LCC) of nodes remaining 
in the residual graphs. This local measure is meaningful in the sense that a small max-LCC of a network indicates a low level of clustering. Therefore, we 
observe how the methods reduce the max LCC of the 
graphs. 
The results are reported in Fig. \ref{fig:lcc}. 
The subfigures indicate that \texttt{fast\_greedy} is really effective in not only degrading ALCC but also the max-LCC of all tested networks. In ER and BA models, \texttt{fast\_greedy} quickly destroys the clustering coefficients at just 0.02\% total nodes removed, and only lags behind \texttt{lcc\_greedy} (which was expected to be the leading method) in WS model and Facebook. Furthermore, \texttt{fast\_greedy} appears to be more stable than the others as it does not fluctuate between high and low values. In Facebook data, 
\texttt{fast\_greedy} quickly degrades max-LCC values from 1 to approximately 0.5. This fact indicates that the resulting Facebook clusters and structure might not be very robust. In ArXiv and NetHEPT data, all methods are unable to degrade the LCC which demonstrates that there are a lot of local clusters in these networks.
%--------------------------%
\subsection{Running Time} \label{sect:exp_rt}
%--------------------------%
\begin{figure}[t]
  \centering
	\subfigure[Edors-Renyi] {
		\includegraphics[scale=0.19]{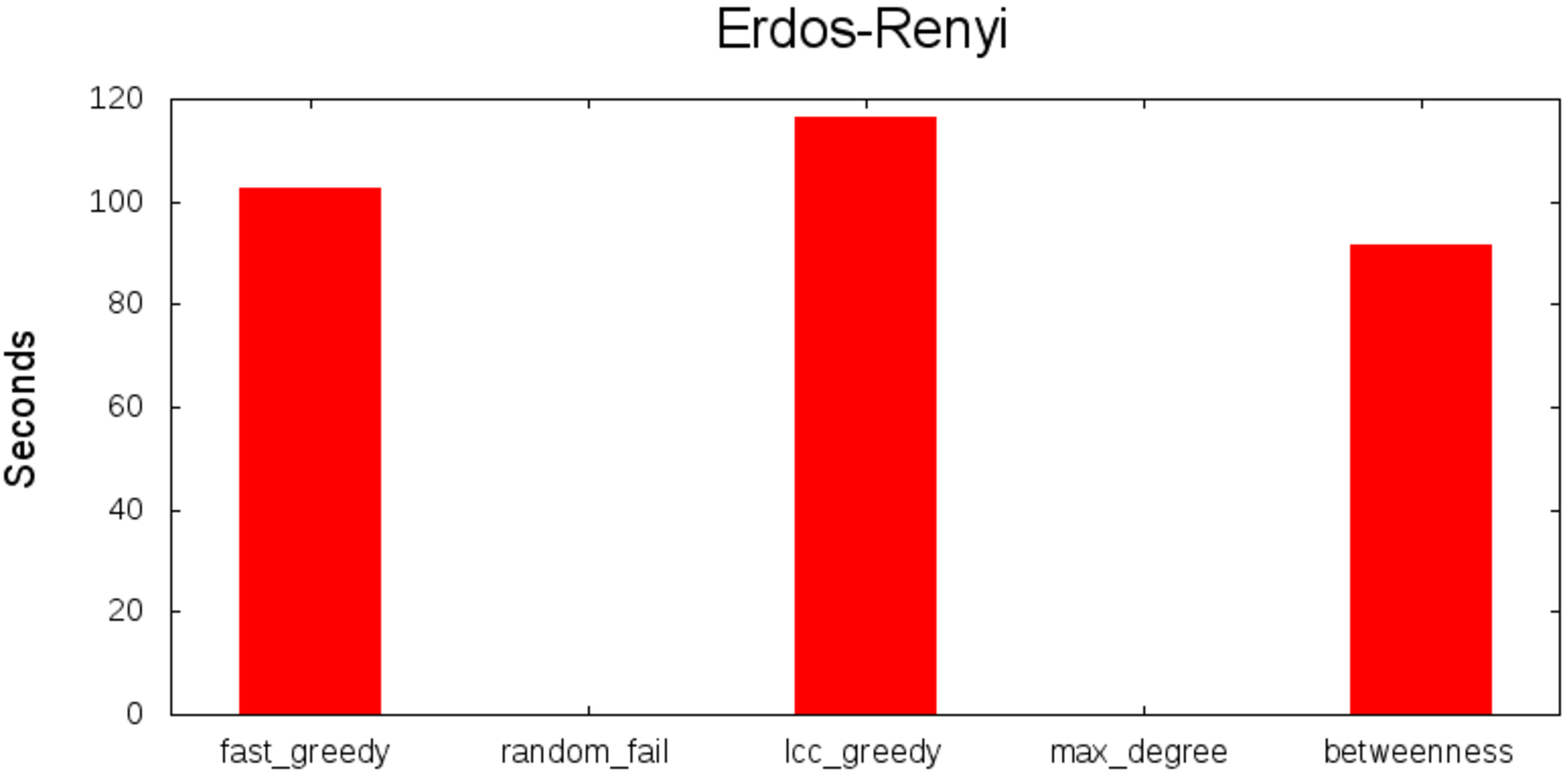}
		\label{fig:runtimeER}	}  
  \subfigure[Watz-Strogatz] {
		\includegraphics[scale=0.19]{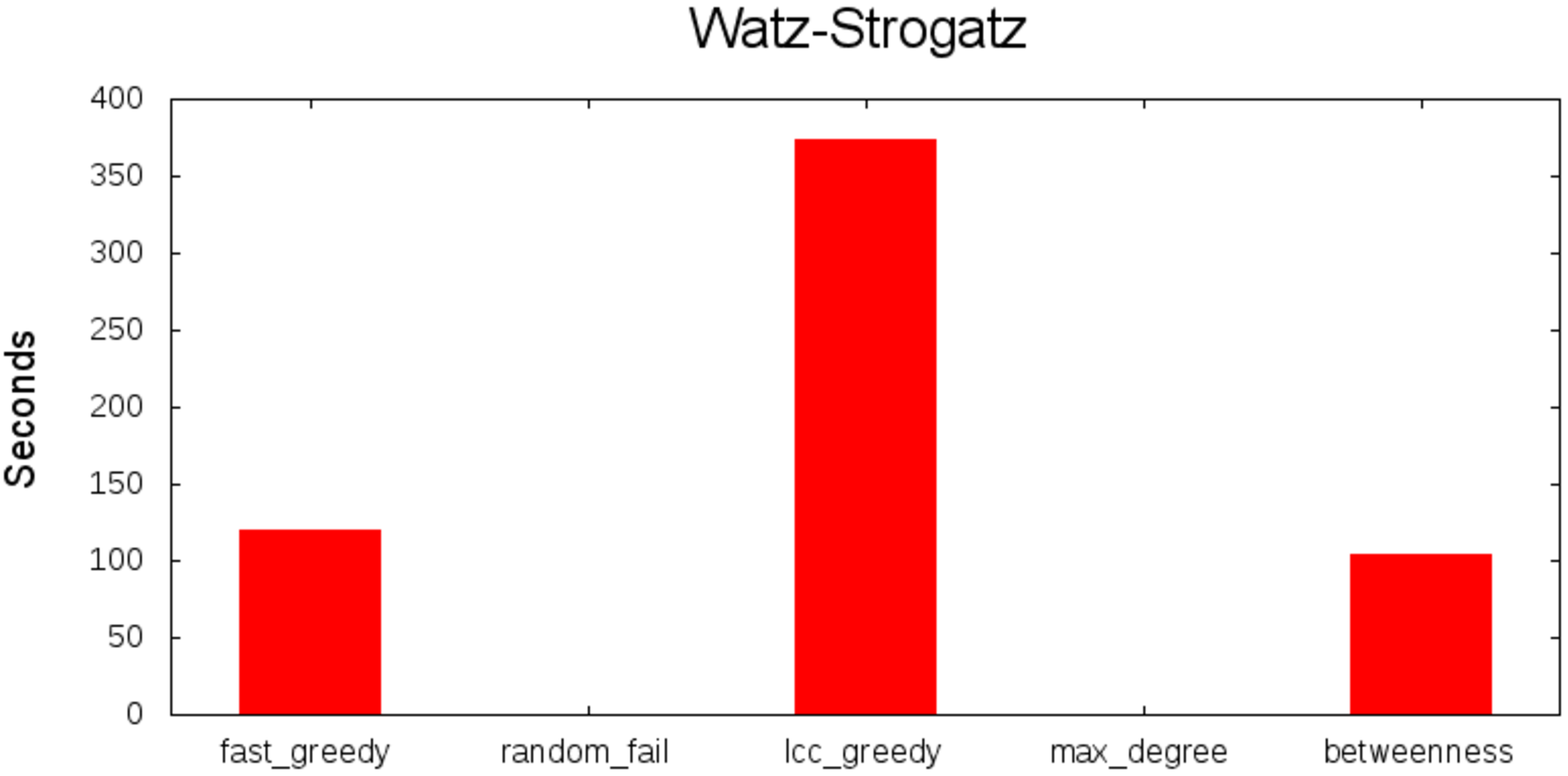}
		\label{fig:runtimeWS}	}
	\subfigure[Barab\'asi-Albert] {
		\includegraphics[scale=0.19]{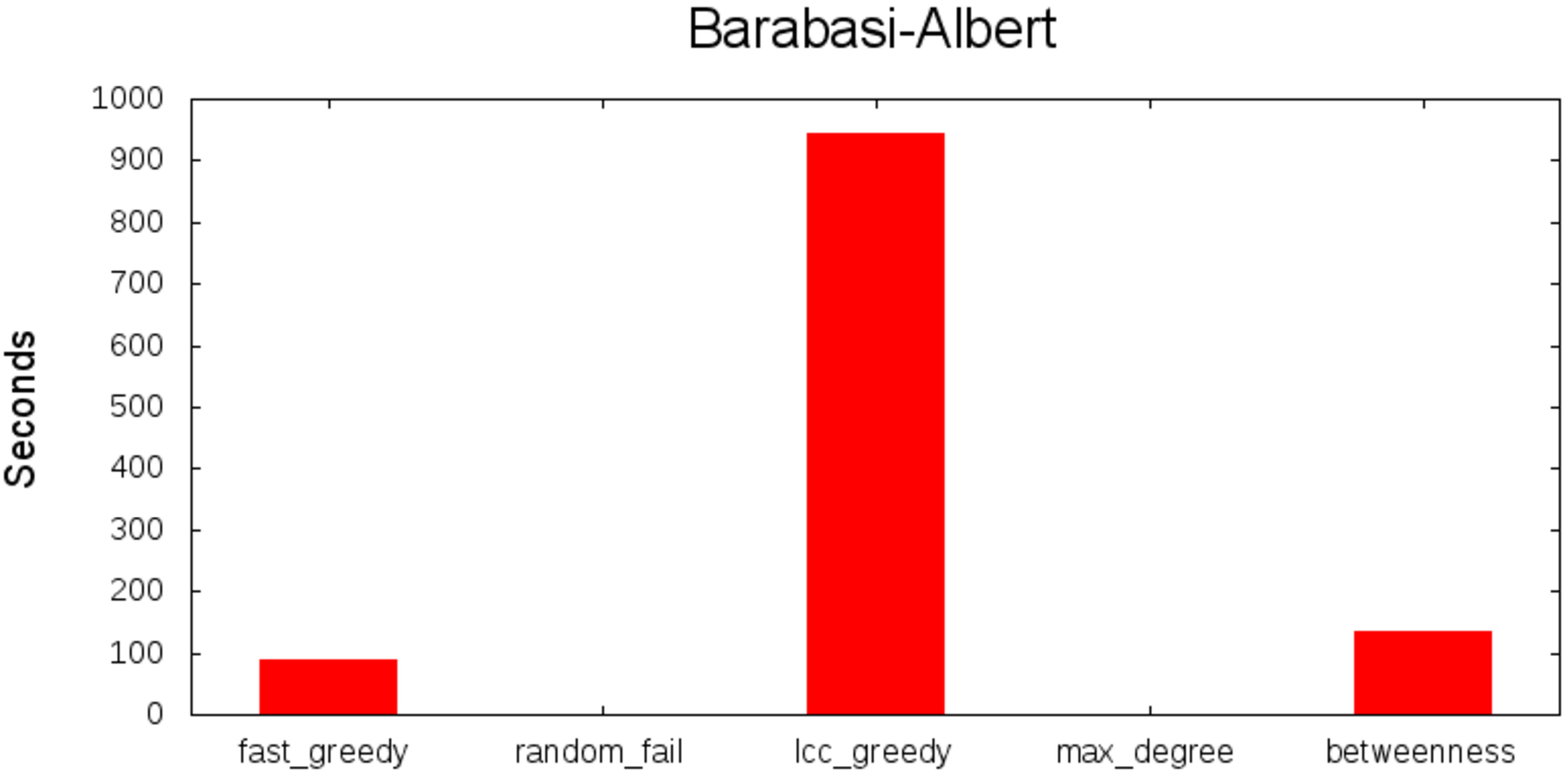}
		\label{fig:runtimeBA}	}
	\subfigure[Arxiv] {
		\includegraphics[scale=0.19]{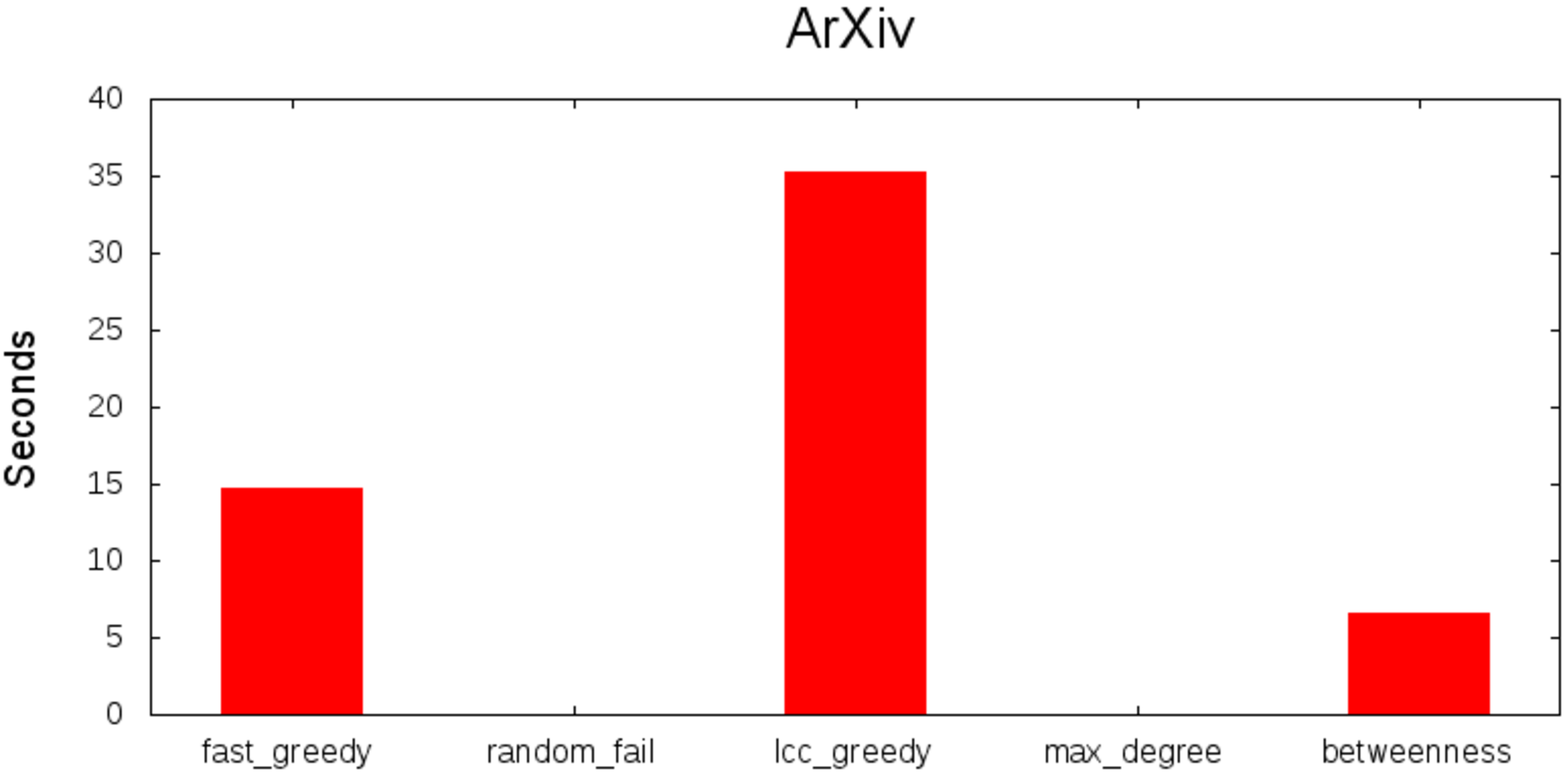}
		\label{fig:runtimeArxiv}	}  
  \subfigure[NetHEPT] {
		\includegraphics[scale=0.19]{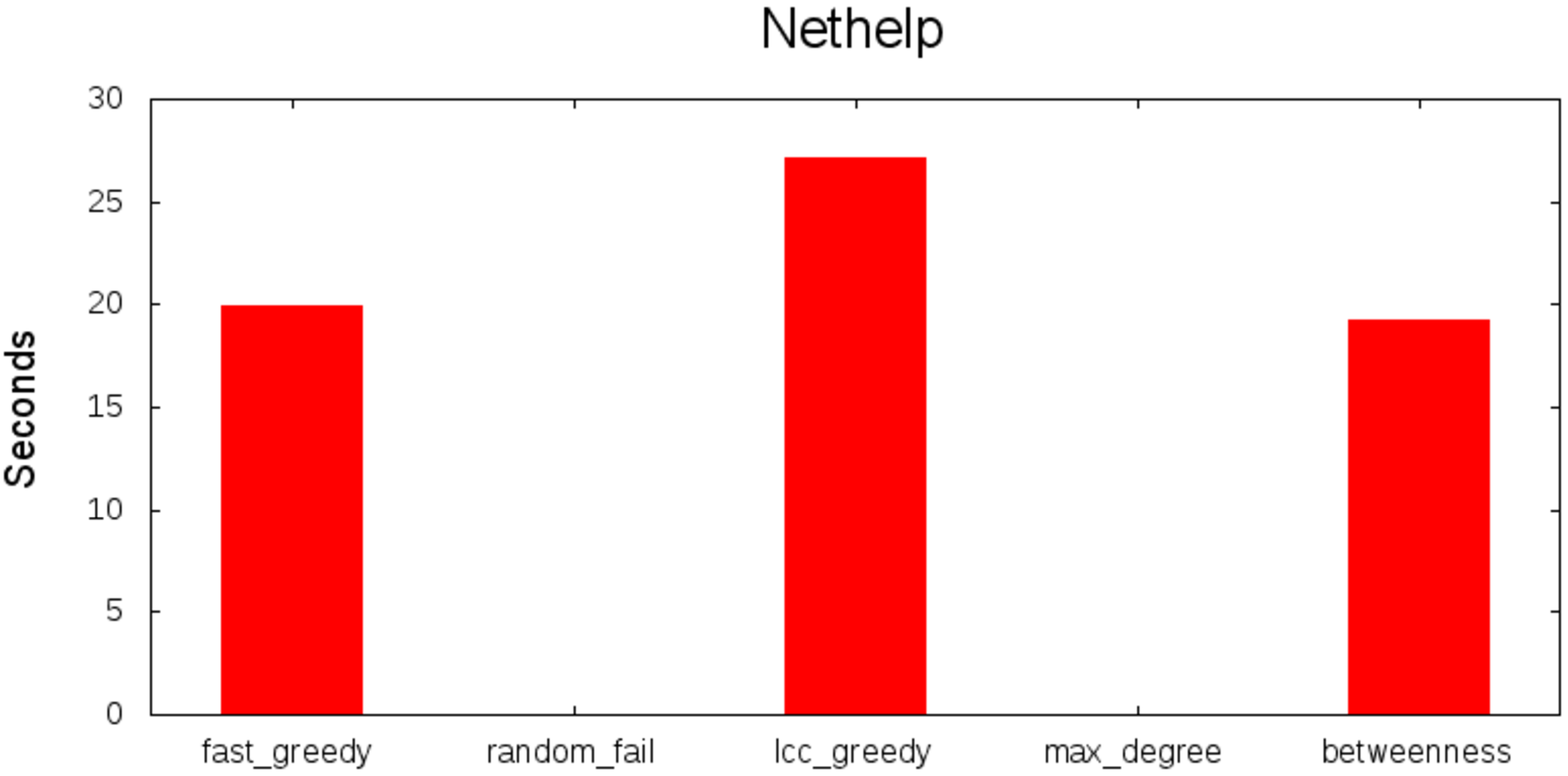}
		\label{fig:runtimeNET}	}
	\subfigure[Facebook] {
		\includegraphics[scale=0.19]{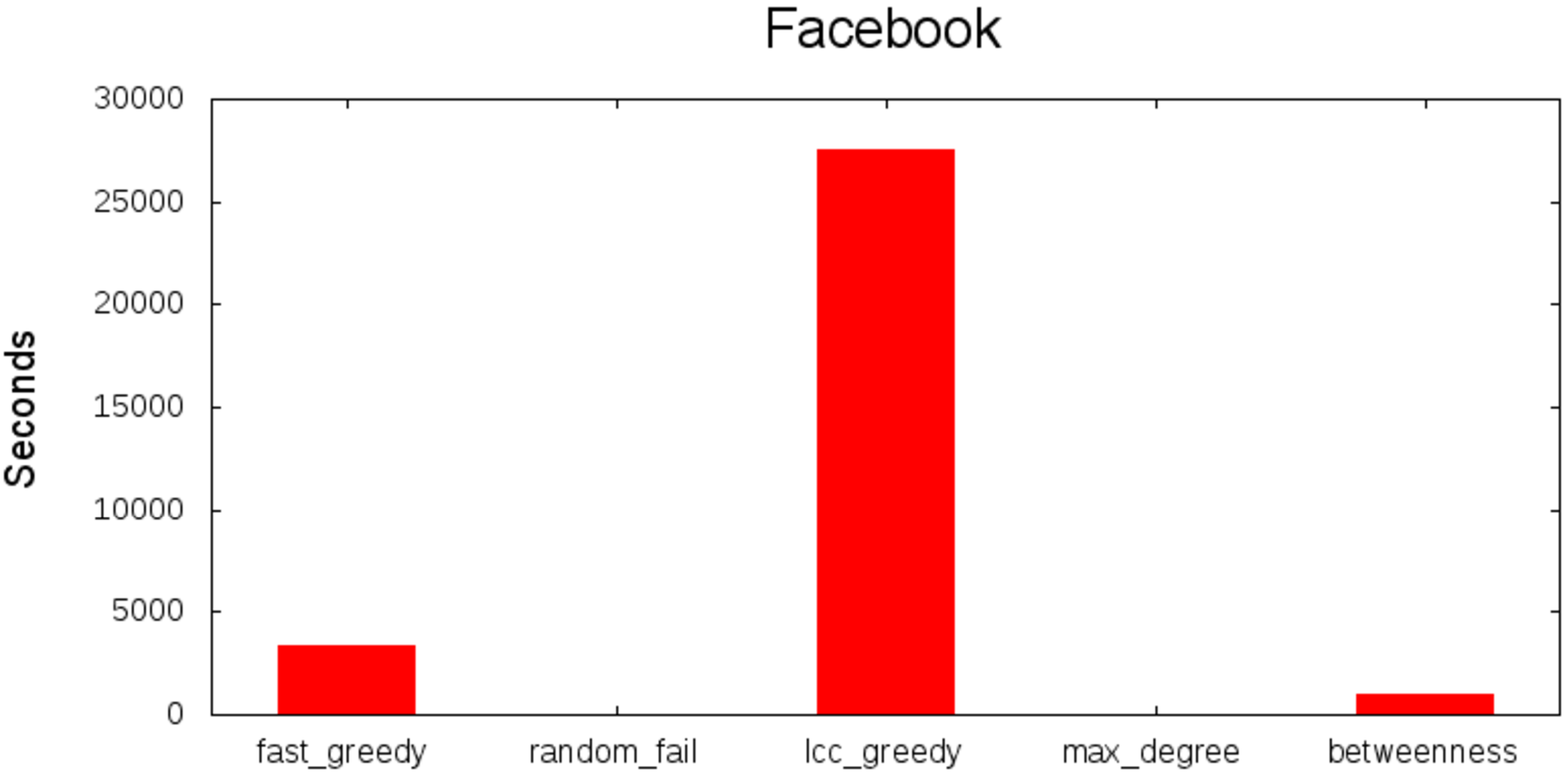}
		\label{fig:runtimeFB}	}
	%\vspace{-0.1in}	
	\caption{Running time}
	\label{fig:runtime}
	%\vspace{-0.3in}
	\end{figure}
%--------------------------%
The running time of all methods is presented in Fig. \ref{fig:runtime}. As the baseline methods, 
\texttt{random\_failure} and 
\texttt{max\_degree} do not require much time 
for their execution due to their simple 
nature whereas \texttt{lcc\_greedy},
in contrast, requires 
a considerable amount of execution time.
\texttt{fast\_greedy} and \texttt{betweenness} algorithms on average require fairly similar amounts of time for their tasks on all networks.
\texttt{simple\_greedy}, as a pay off for its simple design and implementation, takes a significant amount of time to finish its tasks (at least 5 times more than that taken by \texttt{lcc\_greedy}) and is excluded from the charts for more visibility.

\section{Conclusion}\label{ssconclusion}

Clustering vulnerability is an important aspect in assessing the robustness of complex networks,
as the level of clustering has significance 
for a variety of applications,
including a salient role in the propagation of information in a social network.
We have shown the discovery of the most 
important nodes 
to clustering is $NP$-complete, { \color{black}
and we offer two polynomial-time heuristics
for this identification. 
Empirical results in comparison with different strategies on synthesized and real networks show that the average clustering coefficient is robust to failure of random nodes 
and confirm that our suggested 
algorithm FAGA (\texttt{fast\_greedy}) 
is effective in analyzing node vulnerability 
of clustering and is scalable 
to larger networks. }

%\bibpunct{(}{)}{,}{a}{}{;}
\bibliographystyle{plainnat}
\bibliography{myref,triangle,mendeley}   % name your BibTeX data base
\end{document}